\def\BibTeX{{\rm B\kern-.05em{\sc i\kern-.025em b}\kern-.08em T\kern-.1667em\lower.7ex\hbox{E}\kern-.125emX}}
\newtheorem{lem}{Lemma}
\newcommand\fs@betterruled{%
  \def\@fs@cfont{\bfseries}\let\@fs@capt\floatc@ruled
  \def\@fs@pre{\vspace*{5pt}\hrule height.8pt depth0pt \kern2pt}%
  \def\@fs@post{\kern2pt\hrule\relax}%
  \def\@fs@mid{\kern2pt\hrule\kern2pt}%
  \let\@fs@iftopcapt\iftrue}
\definecolor{custom_blue}{HTML}{1F77B4}
\definecolor{custom_orange}{HTML}{FF7F0E}
\definecolor{custom_green}{HTML}{2CA02C}
\pgfplotsset{compat=1.17}
\newcommand{\dg}[1]{\textcolor{red}{{#1}}}
\newcommand{\gf}[1]{\textcolor{cyan}{{#1}}}
\newcommand{\mx}[1]{\mathbf{#1}}
\begin{document}

\begin{acronym}
  \acro{2G}{Second Generation}
  \acro{3G}{3$^\text{rd}$~Generation}
  \acro{3GPP}{3$^\text{rd}$~Generation Partnership Project}
  \acro{4G}{4$^\text{th}$~Generation}
  \acro{5G}{5$^\text{th}$~Generation}
  \acro{AA}{Antenna Array}
  \acro{AC}{Admission Control}
  \acro{AD}{Attack-Decay}
  \acro{ADSL}{Asymmetric Digital Subscriber Line}
	\acro{AHW}{Alternate Hop-and-Wait}
  \acro{AMC}{Adaptive Modulation and Coding}
	\acro{AP}{access point}
  \acro{AoA}{angle of arrival}
  \acro{AoI}{age of information}
  \acro{APA}{Adaptive Power Allocation}
  \acro{AR}{autoregressive}
  \acro{ARMA}{Autoregressive Moving Average}
  \acro{ATES}{Adaptive Throughput-based Efficiency-Satisfaction Trade-Off}
  \acro{AWGN}{additive white Gaussian noise}
  \acro{BB}{Branch and Bound}
  \acro{BD}{Block Diagonalization}
  \acro{BER}{bit error rate}
  \acro{BF}{Best Fit}
  \acro{BLER}{BLock Error Rate}
  \acro{BPC}{Binary power control}
  \acro{BPSK}{binary phase-shift keying}
  \acro{BPA}{Best \ac{PDPR} Algorithm}
  \acro{BRA}{Balanced Random Allocation}
  \acro{BS}{base station}
  \acro{CAP}{Combinatorial Allocation Problem}
  \acro{CAPEX}{Capital Expenditure}
  \acro{CAZAC}{Constant Amplitude Zero Autocorrelation}
  \acro{CB}{codebook}
  \acro{CBF}{Coordinated Beamforming}
  \acro{CBR}{Constant Bit Rate}
  \acro{CBS}{Class Based Scheduling}
  \acro{CC}{Congestion Control}
  \acro{CDF}{Cumulative Distribution Function}
  \acro{CDMA}{Code-Division Multiple Access}
  \acro{CL}{Closed Loop}
  \acro{CLPC}{Closed Loop Power Control}
  \acro{CNR}{Channel-to-Noise Ratio}
  \acro{CPA}{Cellular Protection Algorithm}
  \acro{CPICH}{Common Pilot Channel}
  \acro{CoMP}{Coordinated Multi-Point}
  \acro{CQI}{Channel Quality Indicator}
  \acro{CRLB}{Cram\'er-Rao Lower Bound}
  \acro{CRM}{Constrained Rate Maximization}
	\acro{CRN}{Cognitive Radio Network}
  \acro{CS}{Coordinated Scheduling}
  \acro{CSI}{channel state information}
  \acro{CSIR}{channel state information at the receiver}
  \acro{CSIT}{channel state information at the transmitter}
  \acro{CUE}{cellular user equipment}
  \acro{D2D}{device-to-device}
  \acro{DCA}{Dynamic Channel Allocation}
  \acro{DE}{Differential Evolution}
  \acro{DFT}{Discrete Fourier Transform}
  \acro{DISCOVER}{Deep Intrinsically Motivated Exploration}
  \acro{DIST}{Distance}
  \acro{DL}{downlink}
  \acro{DMA}{Double Moving Average}
	\acro{DMRS}{Demodulation Reference Signal}
  \acro{D2DM}{D2D Mode}
  \acro{DMS}{D2D Mode Selection}
  \acro{DPC}{Dirty Paper Coding}
  \acro{DRA}{Dynamic Resource Assignment}
  \acro{DRL}{deep reinforcement learning}
  \acro{DDPG}{Deep Deterministic Policy Gradient}
  \acro{RL}{reinforcement learning}
  \acro{SAC}{Soft Actor-Critic}
  \acro{DSA}{Dynamic Spectrum Access}
  \acro{DSM}{Delay-based Satisfaction Maximization}
  \acro{ECC}{Electronic Communications Committee}
  \acro{ECRB}{expectation of the conditional Cramér-Rao bound}
  \acro{EFLC}{Error Feedback Based Load Control}
  \acro{EI}{Efficiency Indicator}
  \acro{eNB}{Evolved Node B}
  \acro{EPA}{Equal Power Allocation}
  \acro{EPC}{Evolved Packet Core}
  \acro{EPS}{Evolved Packet System}
  \acro{E-UTRAN}{Evolved Universal Terrestrial Radio Access Network}
  \acro{ES}{Exhaustive Search}
  \acro{FDD}{frequency division duplexing}
  \acro{FDM}{Frequency Division Multiplexing}
  \acro{FER}{Frame Erasure Rate}
  \acro{FF}{Fast Fading}
  \acro{FIM}{Fisher Information Matrix}
  \acro{FSB}{Fixed Switched Beamforming}
  \acro{FST}{Fixed SNR Target}
  \acro{FTP}{File Transfer Protocol}
  \acro{GA}{Genetic Algorithm}
  \acro{GBR}{Guaranteed Bit Rate}
  \acro{GLR}{Gain to Leakage Ratio}
  \acro{GOS}{Generated Orthogonal Sequence}
  \acro{GPL}{GNU General Public License}
  \acro{GRP}{Grouping}
  \acro{HARQ}{Hybrid Automatic Repeat Request}
  \acro{HMS}{Harmonic Mode Selection}
  \acro{HOL}{Head Of Line}
  \acro{HSDPA}{High-Speed Downlink Packet Access}
  \acro{HSPA}{High Speed Packet Access}
  \acro{HTTP}{HyperText Transfer Protocol}
  \acro{ICMP}{Internet Control Message Protocol}
  \acro{ICI}{Intercell Interference}
  \acro{ID}{Identification}
  \acro{IETF}{Internet Engineering Task Force}
  \acro{ILP}{Integer Linear Program}
  \acro{JRAPAP}{Joint RB Assignment and Power Allocation Problem}
  \acro{UID}{Unique Identification}
  \acro{i.i.d.}{independent and identically distributed}
  \acro{IIR}{Infinite Impulse Response}
  \acro{ILP}{Integer Linear Problem}
  \acro{IMT}{International Mobile Telecommunications}
  \acro{INV}{Inverted Norm-based Grouping}
	\acro{IoT}{Internet of Things}
  \acro{IP}{Internet Protocol}
  \acro{IPv6}{Internet Protocol Version 6}
  \acro{IRS}{intelligent reflecting surface}
  \acro{ISD}{Inter-Site Distance}
  \acro{ISI}{Inter Symbol Interference}
  \acro{ITU}{International Telecommunication Union}
  \acro{JOAS}{Joint Opportunistic Assignment and Scheduling}
  \acro{JOS}{Joint Opportunistic Scheduling}
  \acro{JP}{Joint Processing}
	\acro{JS}{Jump-Stay}
    \acro{KF}{Kalman filter}
  \acro{KKT}{Karush-Kuhn-Tucker}
  \acro{L3}{Layer-3}
  \acro{LAC}{Link Admission Control}
  \acro{LA}{Link Adaptation}
  \acro{LC}{Load Control}
  \acro{LMMSE}{linear minimum mean squared error}
  \acro{LOS}{line of sight}
  \acro{LP}{Linear Programming}
  \acro{LS}{least squares}
  \acro{LTE}{Long Term Evolution}
  \acro{LTE-A}{LTE-Advanced}
  \acro{LTE-Advanced}{Long Term Evolution Advanced}
  \acro{M2M}{Machine-to-Machine}
  \acro{MAC}{Medium Access Control}
  \acro{MANET}{Mobile Ad hoc Network}
  \acro{MC}{Modular Clock}
  \acro{MCS}{Modulation and Coding Scheme}
  \acro{MDB}{Measured Delay Based}
  \acro{MDI}{Minimum D2D Interference}
  \acro{MF}{Matched Filter}
  \acro{MG}{Maximum Gain}
  \acro{MH}{Multi-Hop}
  \acro{MIMO}{multiple-input multiple-output}
  \acro{MINLP}{Mixed Integer Nonlinear Programming}
  \acro{MIP}{Mixed Integer Programming}
  \acro{MISO}{Multiple-input single-output}
  \acro{ML}{maximum likelihood}
  \acro{MLWDF}{Modified Largest Weighted Delay First}
  \acro{MME}{Mobility Management Entity}
  \acro{MML}{misspecified maximum likelihood}
  \acro{MMSE}{minimum mean squared error}
  \acro{MOS}{Mean Opinion Score}
  \acro{MPF}{Multicarrier Proportional Fair}
  \acro{MRA}{Maximum Rate Allocation}
  \acro{MR}{Maximum Rate}
  \acro{MRC}{maximum ratio combining}
  \acro{MRT}{Maximum Ratio Transmission}
  \acro{MRUS}{Maximum Rate with User Satisfaction}
  \acro{MS}{mobile station}
  \acro{MSE}{mean squared error}
  \acro{MSI}{Multi-Stream Interference}
  \acro{MTC}{Machine-Type Communication}
  \acro{MTSI}{Multimedia Telephony Services over IMS}
  \acro{MTSM}{Modified Throughput-based Satisfaction Maximization}
  \acro{MU-MIMO}{multiuser multiple-input multiple-output}
  \acro{MU-MISO}{multiuser multiple-input single-output}
  \acro{MU}{multi-user}
  \acro{NAS}{Non-Access Stratum}
  \acro{NB}{Node B}
  \acro{NE}{Nash equilibrium}
  \acro{NCL}{Neighbor Cell List}
  \acro{NLP}{Nonlinear Programming}
  \acro{NLOS}{Non-Line of Sight}
  \acro{NMSE}{normalized mean squared error}
  \acro{NOMA}{non-orthogonal multiple access}
  \acro{NORM}{Normalized Projection-based Grouping}
  \acro{NP}{Non-Polynomial Time}
  \acro{NR}{New Radio}
  \acro{NRT}{Non-Real Time}
  \acro{NSPS}{National Security and Public Safety Services}
  \acro{O2I}{Outdoor to Indoor}
  \acro{OFDMA}{orthogonal frequency division multiple access}
  \acro{OFDM}{orthogonal frequency division multiplexing}
  \acro{OFPC}{Open Loop with Fractional Path Loss Compensation}
	\acro{O2I}{Outdoor-to-Indoor}
  \acro{OL}{Open Loop}
  \acro{OLPC}{Open-Loop Power Control}
  \acro{OL-PC}{Open-Loop Power Control}
  \acro{OPEX}{Operational Expenditure}
  \acro{ORB}{Orthogonal Random Beamforming}
  \acro{JO-PF}{Joint Opportunistic Proportional Fair}
  \acro{OSI}{Open Systems Interconnection}
  \acro{PAIR}{D2D Pair Gain-based Grouping}
  \acro{PAPR}{Peak-to-Average Power Ratio}
  \acro{P2P}{Peer-to-Peer}
  \acro{PC}{Power Control}
  \acro{PCI}{Physical Cell ID}
  \acro{PDF}{Probability Density Function}
  \acro{PDPR}{pilot-to-data power ratio}
  \acro{PER}{Packet Error Rate}
  \acro{PF}{Proportional Fair}
  \acro{P-GW}{Packet Data Network Gateway}
  \acro{PL}{Pathloss}
  \acro{PPR}{pilot power ratio}
  \acro{PRB}{physical resource block}
  \acro{PROJ}{Projection-based Grouping}
  \acro{ProSe}{Proximity Services}
  \acro{PS}{Packet Scheduling}
  \acro{PSAM}{pilot symbol assisted modulation}
  \acro{PSO}{Particle Swarm Optimization}
  \acro{PZF}{Projected Zero-Forcing}
  \acro{QAM}{Quadrature Amplitude Modulation}
  \acro{QoS}{Quality of Service}
  \acro{QPSK}{Quadri-Phase Shift Keying}
  \acro{RAISES}{Reallocation-based Assignment for Improved Spectral Efficiency and Satisfaction}
  \acro{RAN}{Radio Access Network}
  \acro{RA}{Resource Allocation}
  \acro{RAT}{Radio Access Technology}
  \acro{RATE}{Rate-based}
  \acro{RB}{resource block}
  \acro{RBG}{Resource Block Group}
  \acro{REF}{Reference Grouping}
  \acro{RIS}{reconfigurable intelligent surface}
  \acro{RLC}{Radio Link Control}
  \acro{RM}{Rate Maximization}
  \acro{RNC}{Radio Network Controller}
  \acro{RND}{Random Grouping}
  \acro{RRA}{Radio Resource Allocation}
  \acro{RRM}{Radio Resource Management}
  \acro{RSCP}{Received Signal Code Power}
  \acro{RSRP}{Reference Signal Receive Power}
  \acro{RSRQ}{Reference Signal Receive Quality}
  \acro{RR}{Round Robin}
  \acro{RRC}{Radio Resource Control}
  \acro{RSSI}{Received Signal Strength Indicator}
  \acro{RT}{Real Time}
  \acro{RU}{Resource Unit}
  \acro{RUNE}{RUdimentary Network Emulator}
  \acro{RV}{Random Variable}
  \acro{SAC}{Soft Actor-Critic}
  \acro{SCM}{Spatial Channel Model}
  \acro{SC-FDMA}{Single Carrier - Frequency Division Multiple Access}
  \acro{SD}{Soft Dropping}
  \acro{S-D}{Source-Destination}
  \acro{SDPC}{Soft Dropping Power Control}
  \acro{SDMA}{Space-Division Multiple Access}
  \acro{SE}{spectral efficiency}
  \acro{SER}{Symbol Error Rate}
  \acro{SES}{Simple Exponential Smoothing}
  \acro{S-GW}{Serving Gateway}
  \acro{SINR}{signal-to-interference-plus-noise ratio}
  \acro{SI}{Satisfaction Indicator}
  \acro{SIP}{Session Initiation Protocol}
  \acro{SISO}{single input single output}
  \acro{SIMO}{single input multiple output}
  \acro{SIR}{signal-to-interference ratio}
  \acro{SLNR}{Signal-to-Leakage-plus-Noise Ratio}
  \acro{SMA}{Simple Moving Average}
  \acro{SNR}{signal-to-noise ratio}
  \acro{SORA}{Satisfaction Oriented Resource Allocation}
  \acro{SORA-NRT}{Satisfaction-Oriented Resource Allocation for Non-Real Time Services}
  \acro{SORA-RT}{Satisfaction-Oriented Resource Allocation for Real Time Services}
  \acro{SPF}{Single-Carrier Proportional Fair}
  \acro{SRA}{Sequential Removal Algorithm}
  \acro{SRS}{Sounding Reference Signal}
  \acro{SU-MIMO}{single-user multiple input multiple output}
  \acro{SU}{Single-User}
  \acro{SVD}{Singular Value Decomposition}
  \acro{TCP}{Transmission Control Protocol}
  \acro{TDD}{time division duplexing}
  \acro{TDMA}{Time Division Multiple Access}
  \acro{TETRA}{Terrestrial Trunked Radio}
  \acro{TP}{Transmit Power}
  \acro{TPC}{Transmit Power Control}
  \acro{TTI}{Transmission Time Interval}
  \acro{TTR}{Time-To-Rendezvous}
  \acro{TSM}{Throughput-based Satisfaction Maximization}
  \acro{TU}{Typical Urban}
  \acro{UAV}{unmanned aerial vehicle}
  \acro{UE}{user equipment}
  \acro{UEPS}{Urgency and Efficiency-based Packet Scheduling}
  \acro{UL}{uplink}
  \acro{ULA}{uniform linear array}
  \acro{UMTS}{Universal Mobile Telecommunications System}
  \acro{URA}{uniform rectangular array}
  \acro{URI}{Uniform Resource Identifier}
  \acro{URM}{Unconstrained Rate Maximization}
  \acro{UT}{user terminal}
  \acro{VAR}{vector autoregressive}
  \acro{VR}{Virtual Resource}
  \acro{VoIP}{Voice over IP}
  \acro{WAN}{Wireless Access Network}
  \acro{WCDMA}{Wideband Code Division Multiple Access}
  \acro{WF}{Water-filling}
  \acro{WiMAX}{Worldwide Interoperability for Microwave Access}
  \acro{WINNER}{Wireless World Initiative New Radio}
  \acro{WLAN}{Wireless Local Area Network}
  \acro{WMPF}{Weighted Multicarrier Proportional Fair}
  \acro{WPF}{Weighted Proportional Fair}
  \acro{WSN}{Wireless Sensor Network}
  \acro{WSS}{wide-sense stationary}
  \acro{WWW}{World Wide Web}
  \acro{XIXO}{(Single or Multiple) Input (Single or Multiple) Output}
  \acro{ZF}{zero-forcing}
  \acro{ZMCSCG}{Zero Mean Circularly Symmetric Complex Gaussian}
\end{acronym}

\title{Combating Inter-Operator Pilot Contamination in Reconfigurable Intelligent Surfaces Assisted Multi-Operator Networks}

\author{Do\u{g}a G\"{u}rg\"{u}no\u{g}lu, \emph{Student Member, IEEE},
        Emil~Bj\"{o}rnson,~\emph{Fellow, IEEE},~G\'abor~Fodor, \emph{Senior Member, IEEE}
\thanks{D. G\"urg\"unoglu, E. Bj\"{o}rnson and G. Fodor are with the Faculty of Electrical Engineering and Computer Science, KTH Royal Institute of Technology, Stockholm 100 44, Sweden (e-mails: \{dogag,emilbjo,gaborf\}@kth.se). G. Fodor is also with Ericsson Research, Stockholm 164 80, Sweden (e-mail: \{gabor.fodor\}@ericsson.com). This study is supported by EU Horizon 2020 MSCA-ITN-METAWIRELESS, Grant Agreement 956256. E.~Bj\"ornson is funded by the FFL18-0277 grant from SSF. The preliminary version  \cite{Gurgunoglu2023_BlackSeaCom} of this work was presented at IEEE BlackSeacom 2023, Istanbul, Turkiye.}
}

\vspace{-0.8cm}
\vspace{-0.1cm}

\maketitle

\begin{abstract}
In this paper, we study a new kind of pilot contamination appearing in multi-operator reconfigurable intelligent surfaces (RIS) 
assisted networks, where multiple operators provide services to their respective served users. 
The operators use dedicated frequency bands, but each RIS inadvertently reflects the transmitted uplink signals of 
the user equipment devices in multiple bands. 
Consequently, the concurrent reflection of pilot signals during the channel estimation phase introduces a new inter-operator pilot contamination effect. 
We investigate the implications of this effect in systems with either deterministic or correlated Rayleigh fading channels, 
specifically focusing on its impact on channel estimation quality, signal equalization, and channel capacity. 
The numerical results demonstrate the substantial degradation in system performance caused by this phenomenon and highlight the pressing need 
to address inter-operator pilot contamination in multi-operator RIS deployments. 
To combat the negative effect of this new type of pilot contamination, we propose to use orthogonal RIS configurations during
uplink pilot transmission, which can mitigate or eliminate the negative effect of inter-operator pilot contamination at the
expense of some inter-operator information exchange and orchestration. 
\end{abstract}

\begin{IEEEkeywords}
Reconfigurable intelligent surface, channel estimation, pilot contamination.
\end{IEEEkeywords}

\section{Introduction}
Pilot contamination is a key problem that frequently arises in wireless communication systems \cite{Marzetta2010a}. 
When multiple users use the same pilot sequences simultaneously in the same band, due to the limited channel coherence time, 
the \ac{BS} cannot distinguish their channels. 
This typically results in poor channel estimates and extra beamformed interference from or towards the \acp{UE} that reuse the same pilot sequence. 
Therefore, pilot contamination adversely affects the coherent reception of data, 
and methods to mitigate pilot contamination---including adaptive pilot reuse, power control, user grouping, multi-cell coordination, and coded random access techniques---have been widely studied in the communication literature \cite{Marzetta2010a,Sanguinetti2020a,pCon1,Saxena:15, Fodor:17}.

In recent years, \acp{RIS} have arisen as a new technology component for 6G \cite{ris_commag}. 
An \ac{RIS} is a surface consisting of multiple reflecting elements that have sub-wavelength spacing and controllable 
reflection properties \cite{Bjornson2020a}.
This feature provides partial control of
the propagation environment that can lead to better services for users, especially when their serving \ac{BS} is not in their \ac{LOS}. 
By adjusting the impedances of the individual elements via a \ac{RIS} controller, the elements are capable of adding desired phase shifts 
to the reflected signals, thereby forming reflected beams in desired directions that can significantly boost the \ac{SINR} 
and reduce the symbol estimation error at the receiver \cite{ris_commag, Araujo:23}.

On the other hand, the addition of \acp{RIS} to existing systems introduces new design and operational challenges \cite{Liu:23}. 
For example, the length of the pilot signal required by a single \ac{UE} is proportional to the number of \ac{RIS} elements 
(e.g., tens or hundreds), because the \ac{RIS} must change its configuration to explore all channel dimensions \cite{risChanEst,Bjornson2022b}.
In addition, the path loss of the reflected path through a passive \ac{RIS} is proportional to the multiplication of the path losses 
to and from the \ac{RIS} \cite{ris_zappone}, so a larger surface is needed to achieve a decent \ac{SNR} improvement. 
Active \acp{RIS}, on the other hand, use amplifiers to overcome the large path loss 
but have the traditional issues of relays: increased power consumption, higher cost, and additional noise \cite{Zhang2023,Rihan:23}. 
While the aforementioned problems caused by the \ac{RIS} have been recognized \cite{Garg:22}, pilot contamination 
caused by the presence of multiple \acp{RIS} has not been studied in the literature. 

Wireless communication systems use standardized protocols, interfaces, and well-defined pilot sequences 
and codebooks to ensure inter-operability \cite{38211}. 
While employing \acp{RIS} in cellular networks have not been studied by the relevant standards organizations yet, 
it may be expected that the configuration sequences that facilitate the deployment of \acp{RIS} while maintaining interoperability 
will be specified. 
Consequently, when multiple cellular networks are deployed by different network operators in overlapping geographical areas, 
the \acp{RIS} may adopt identical or overlapping pilot sequences and cause pilot contamination. The number of orthogonal pilot sequences is limited by the length of the pilot sequence, and increasing the pilot sequence length not only creates more channel estimation overhead but also is infeasible due to the limited coherence budget of the channel. As a consequence, the need for repeating pilot sequences comes up very often.

In this paper, we argue that 
when multiple \acp{RIS} are deployed for the purpose of shaping the propagation characteristics of the environment, 
the propagation characteristics might change in unintended ways. 
For example, an \ac{RIS} belonging to another operator might modify the propagation of a \ac{UE}'s own pilot signal, 
leading to pilot contamination even in the absence of any interfering signals or intra-band pilot reuse. 
The underlying reason is that an \ac{RIS} element---although designed for a particular frequency---does not act 
as a bandpass filter, but reflects all frequencies with varying amplitude and phase. Indeed, as pointed out in \cite{Zhang:20}, due to the lack of baseband signal processing, the 
\ac{RIS} reflects the impinging broadband signal with frequency-flat reflection coefficients.
Therefore, in realistic system models of, for example, passive \ac{RIS} assisted 5G New Radio systems, we
need to take into account that the \ac{RIS} inadvertently reflects the transmitted uplink signals of the user equipment devices in multiple bands as in \cite{Pei:21}, \cite{Tampio:21}.

Specifically,
in this paper, we identify this new pilot contamination phenomenon as a major practical challenge when multiple \ac{RIS} assisted operator networks
are deployed over a geographical area, including the important practical scenario of inter-operator site sharing \cite{Vincenzi:17,Chien:18}.
In such an environment, a \ac{UE} that transmits pilots to its serving \ac{BS} via multiple \ac{RIS}, which may change
their configurations simultaneously, is exposed to new pilot-related ambiguities that have not been studied before.
Since this phenomenon exacerbates the pilot contamination problem, it is clear that pilot contamination due 
to the presence of multiple \acp{RIS} must be dealt with.

To the best of the authors' knowledge, the problem of inter-operator pilot contamination 
in \ac{RIS}-aided wireless communication systems  has not been addressed before, except in the preliminary version of this manuscript \cite{Gurgunoglu2023_BlackSeaCom},
which assumed deterministic rather than stochastically fading channels.
In this paper, we derive the capacity lower bound of the system under pilot contamination and imperfect \ac{CSI} assuming Rayleigh fading.
Our major contributions can be summarized as follows:
\begin{itemize}
    \item For the case when inter-operator pilot contamination is neglected, we provide a misspecified \ac{ML} 
    estimator under the assumption that all the channels in the system setup are deterministic. 
    We also derive the resulting channel estimation \ac{MSE} for different choices of the \ac{RIS} configurations.
    \item 
    Based on the obtained results for the channel estimation error under inter-operator pilot contamination, 
    we provide the data signal estimation \ac{MSE} for a misspecified \ac{MMSE} estimator.
    \item 
    In addition to deterministic channels, we also consider the case where the channels 
    are Rayleigh fading with spatial correlation. 
    For generic channel spatial covariances, we derive the misspecified \ac{MMSE} estimator and the resulting \ac{MSE}.
    \item 
    Based on the channel estimation error model, 
    obtained for spatially correlated Rayleigh fading channels, 
    we derive a capacity lower bound given the channel estimates. 
    Our numerical results show that the choice of \ac{RIS} configurations 
    during channel estimation makes a significant impact on the capacity lower bound.
\end{itemize}
The rest of the manuscript is organized as follows: in Section \ref{sec:system_model}, we provide the received signal model, in Section \ref{sec:MLE}, we provide the misspecified \ac{ML} estimator where the inter-operator pilot contamination is neglected and the channels are assumed to be deterministic. Section \ref{sec:DataTransmission} builds on top of section \ref{sec:MLE} by providing the data estimation \ac{MSE} as a result of inter-operator pilot contamination. In Section \ref{sec:RayleighChan}, we derive the impact of inter-operator pilot contamination in closed-form for the case of spatially correlated Rayleigh-fading channels by considering the channel estimation \ac{MSE} as our performance metric. Since the data estimation \ac{MSE} has a dependence on individual channel realizations, an alternative performance metric for data transmission is needed to capture the behavior of fading channels. To this end, we derive the capacity lower bound under channel side information in Section \ref{sec:capBound}. We provide the numerical results in Section \ref{sec:NumRes}, and conclude the manuscript in Section \ref{sec:Conclusion}.
\section{System Model}\label{sec:system_model}
In this paper, we study the pilot contamination caused by the presence of multiple \acp{RIS} by considering the uplink of a system consisting of two wide-band \acp{RIS}, two single-antenna \acp{UE}, and two co-located single-antenna \acp{BS}. The \acp{UE} are subscribed to different operators and use non-overlapping frequency bands. Each \ac{RIS} is dedicated to and controlled by a single operator, but both \ac{UE} signals are reflected from both \acp{RIS}. In this scenario, although there is no interference between the two \acp{UE}, both \acp{RIS} affect both frequency bands.

In Fig.~\ref{fig:systemSetup}, we graphically describe the system that we consider. The components associated with the two different operators are depicted in two different colors: blue \ac{BS}, \ac{RIS}, \ac{UE} and the channels belong to operator 1, while the red ones belong to operator 2. The operators can potentially use site-sharing (as in the figure) to reduce deployment costs but transmit over two disjoint narrow frequency bands to their respective serving \acp{BS}. Each \ac{RIS} has $N$ reflecting elements, and is dedicated to and controlled by a single operator but affects both bands. To focus on the fundamentals of pilot contamination, we consider an environment where the direct \ac{UE}-\ac{BS} paths are blocked, while the \ac{UE}-\ac{RIS} and \ac{RIS}-\ac{BS} paths are operational. Since the \acp{BS} and \acp{RIS} have fixed deployment locations, we assume the \ac{RIS}-\ac{BS} channels $\mx{h}_k \in \mathbb{C}^N$ are known, while the \ac{UE}-\ac{RIS} channels $\mx{g}_k \in \mathbb{C}^N$ are unknown and to be estimated, for $k = 1,2$.

\begin{figure}
    \centering
    \includegraphics[width=\linewidth]{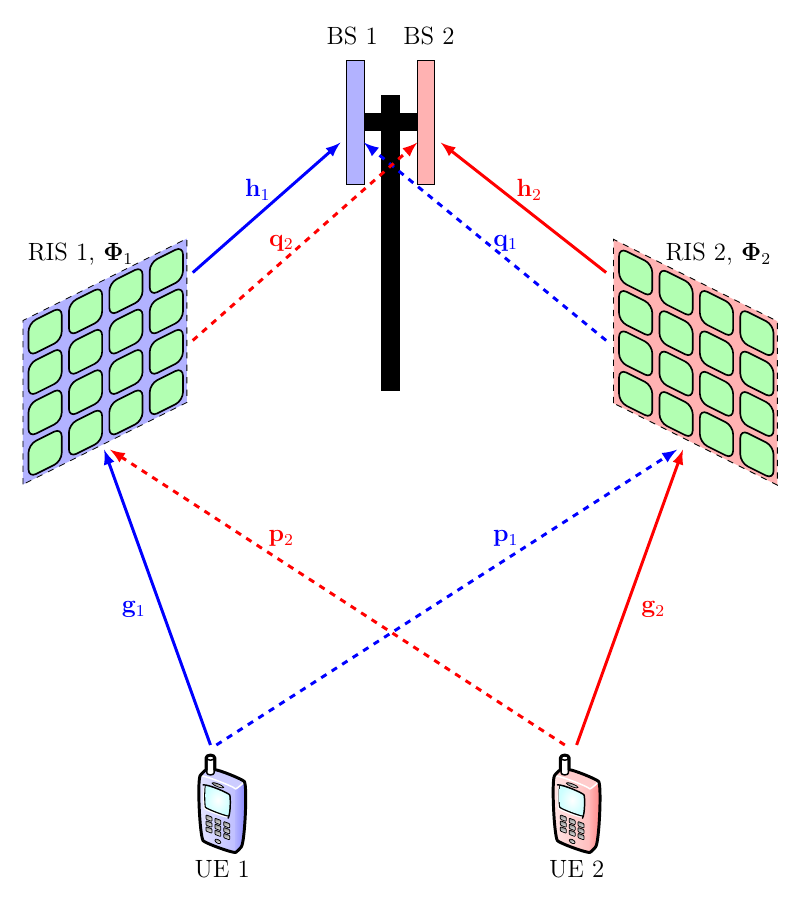}
    \caption{The considered setup with two \acp{UE}, two \acp{RIS}, and two co-located single-antenna \acp{BS}. The blue channels correspond to frequency band 1, and the red channels correspond to frequency band 2, subscribed by \ac{UE}s $1$ and $2$, respectively. The desired channels are denoted by solid lines, while the undesired channels (whose existences might be unknown to the \acp{BS}) are denoted by dashed lines. Each channel vector is $N$-dimensional because each \ac{RIS} has $N$ elements.}
    \label{fig:systemSetup}
\end{figure}

The signal transmitted by \ac{UE} $k$ reaches its serving \ac{BS} through  the channels $\mx{h}_k$ and $\mx{g}_k$, for $k = 1,2$. Importantly, each \ac{UE}'s transmitted signal is also reflected by the non-serving operator's \ac{RIS} and parts of the reflected signal will reach  the serving \ac{BS}. This effect contaminates the pilot signal reflected by the serving \ac{RIS} and we will study the implications. The phenomenon is illustrated in Fig.~\ref{fig:systemSetup}, where the resulting \ac{UE}-\ac{RIS} and \ac{RIS}-\ac{BS} channels are denoted by $\mx{p}_k  \in \mathbb{C}^N$ and $\mx{q}_k \in \mathbb{C}^N$, respectively, for $k = 1,2$.
Defining the pilot signal of \ac{UE} $k$ as $s_k\in\mathbb{C}$, the received signals on bands $1$ and $2$ at the \acp{BS} can be expressed as
\begin{subequations}\label{eq:rxPilot}
\begin{align}
    &y_{p1} = \sqrt{P_p}\mx{h}_1^T\boldsymbol{\Phi}_1\mx{g}_1s_1+\sqrt{P_p}\mx{q}_1^T\boldsymbol{\Phi}_2\mx{p}_1s_1+w_{p1},\\
    &y_{p2} = \sqrt{P_p}\mx{h}_2^T\boldsymbol{\Phi}_2\mx{g}_2s_2+\sqrt{P_p}\mx{q}_2^T\boldsymbol{\Phi}_1\mx{p}_2s_2+w_{p2},
\end{align}
\end{subequations}
where $y_{pk} \in \mathbb{C}$ denotes the received signal, $w_{pk}\sim\mathcal{CN}(0,1)$ denotes the receiver noise for band $k$, and $\boldsymbol{\Phi}_k = \text{diag}(e^{-j\phi_{k1}},\dots,e^{-j\phi_{kN}})$ denotes the $k$th \ac{RIS}'s reflection matrix, and $P_p$ denotes the pilot signal's transmission power. We assume $s_1=s_2=1$ without loss of generality. When analyzing channel estimation, it is more convenient to rewrite \eqref{eq:rxPilot} as
\begin{subequations}\label{eq:rxPilot2}
\begin{align}
    &y_{p1} = \sqrt{P_p}\boldsymbol{\phi}_1^T\mx{D}_{\mx{h}_1}\mx{g}_1+\sqrt{P_p}\boldsymbol{\phi}_2^T\mx{D}_{\mx{q}_1}\mx{p}_1+w_{p1},\\
    &y_{p2} = \sqrt{P_p}\boldsymbol{\phi}_2^T\mx{D}_{\mx{h}_2}\mx{g}_2+\sqrt{P_p}\boldsymbol{\phi}_1^T\mx{D}_{\mx{q}_2}\mx{p}_2+w_{p2},
\end{align}
\end{subequations}
where $\mx{D}_{\mx{h}_k}$ and $\mx{D}_{\mx{q}_k}$ represent the $N \times N$ diagonal matrices containing the elements of $\mx{h}_k$ and $\mx{q}_k$, and $\boldsymbol{\phi}_k \in \mathbb{C}^{N}$ denotes the column vectors containing the diagonal entries of $\boldsymbol{\Phi}_k$ for $ k = 1,2$.

As there are $N$ parameters in $\mx{g}_1$ and $\mx{g}_2$, at least $N$ linearly independent observations are needed to estimate them uniquely. To this end, we perform $L\geq N$ pilot transmissions over time, and we vertically stack the received signals to obtain
\begin{subequations}\label{eq:rxPilotVec}
\begin{align}
    &\mx{y}_{p1}=\sqrt{P_p}\mx{B}_1\mx{D}_{\mx{h}_1}\mx{g}_1+\sqrt{P_p}\mx{B}_2\mx{D}_{\mx{q}_1}\mx{p}_1+\mx{w}_{p1},\\
    &\mx{y}_{p2}=\sqrt{P_p}\mx{B}_2\mx{D}_{\mx{h}_2}\mx{g}_2+\sqrt{P_p}\mx{B}_1\mx{D}_{\mx{q}_2}\mx{p}_2+\mx{w}_{p2},
\end{align}
\end{subequations}
where $\mx{y}_{pk}=[y_{pk}[1],\dots,y_{pk}[L]]^T \in\mathbb{C}^{L}$ denotes the sequence of received signals from the $k$th \ac{UE} over $L$ time instances, and the matrices $\mx{B}_1$ and $\mx{B}_2$ represent the sequence of \ac{RIS} configurations over $L$ time instances; that is, $\mx{B}_k \triangleq \begin{bmatrix}\boldsymbol{\phi}_k[1] & \dots & \boldsymbol{\phi}_k[L]\end{bmatrix}^T \in\mathbb{C}^{L\times N}$
for $k = 1,2$.

In the remainder of this paper, we will analyze channel estimation and the resulting communication performance for deterministic and fading channels, respectively.

\section{Maximum Likelihood Estimation of Deterministic Channels}
\label{sec:MLE}

In this section, we will consider channel estimation for deterministic channels. The same assumptions and results will then be considered in
Section \ref{sec:DataTransmission} for data transmission.

We assume that $\mx{g}_k$ is a deterministic and unknown channel without any known structure. That is, $\mx{g}_k$ is an $N\times1$ vector of complex deterministic parameters to be estimated. In addition, we assume that $\mx{h}_k$ is perfectly known. On the other hand, the \acp{BS} do not know the existence of $\mx{q}_k$ and $\mx{p}_k$. Consequently, we can denote the received signal models assumed by the \acp{BS} as
\begin{subequations}\label{eq:rxPilotHat}
\begin{align}
    &\hat{\mathbf{y}}_{p1}=\sqrt{P_p}\mathbf{B}_1\mathbf{D}_{\mathbf{h}_1}\mathbf{g}_1+\mathbf{w}_{p1},\\
    &\hat{\mathbf{y}}_{p2}=\sqrt{P_p}\mathbf{B}_2\mathbf{D}_{\mathbf{h}_2}\mathbf{g}_2+\mathbf{w}_{p2}.
\end{align}
\end{subequations}
Since $\mx{g}_k$ does not have a known structure and hence consists of $N$ scalars, \ac{BS} $k$ requires at least $N$ independent observations to estimate it. To this end, both $\mathbf{B}_1,\mathbf{B}_2\in\mathbb{C}^{L\times N}$ must have full column rank. Furthermore, we require that the \ac{RIS} configurations used at different time instances are mutually orthogonal and contain entries on the unit circle that can be realized using a reflecting element. These assumptions result in $\mx{B}_k^H\mx{B}_k=L\mx{I}_{N}$. In classical non-Bayesian parameter estimation, the \ac{ML} estimator is widely used, which maximizes the likelihood function of the received observation over the unknown parameter. Since the \acp{BS} have misspecified received pilot signal models, they will instead maximize the likelihood functions obtained from the misspecified model, leading to \ac{MML} estimators. For \eqref{eq:rxPilotHat}, the \ac{MML} estimator can be expressed as
\begin{align}\nonumber
    \hat{\mx{g}}_k &= \arg\max_{\mx{g}_k}f(\mx{y}_{pk};\mx{g}_k)\\
    &=\arg\max_{\mx{g}_k}\frac{1}{(\pi\sigma_w^2)^{L}}\exp{\left(-\frac{\|\mx{y}_{pk}-\sqrt{P_p}\mx{B}_k\mx{D}_{\mx{h}_k}\mx{g}_k\|^2}{\sigma_w^2}\right)}\nonumber\\
    &=\arg\min_{\mx{g}_k}\left\|\mx{y}_{pk}-\sqrt{P_p}\mx{B}_k\mx{D}_{\mx{h}_k}\mx{g}_k\right\|^2\nonumber\\
    &=\frac{1}{\sqrt{P_p}}\mathbf{D}_{\mathbf{h}_k}^{-1}(\mathbf{B}_k^H\mathbf{B}_k)^{-1}\mathbf{B}_k^H\mathbf{y}_{pk}\nonumber\\
    &=\frac{1}{L\sqrt{P_p}}\mathbf{D}_{\mathbf{h}_k}^{-1}\mathbf{B}_k^H\mathbf{y}_{pk}.\label{eq:gHatMLmismatch}
\end{align}
In the following subsections, we describe the behavior of this estimator for two different choices of the $\mx{B}_k$ matrices.

\subsection{Case 1: The \acp{RIS} Adopt the Same 
Configuration Sequence}
We discussed in the introduction that in the absence of inter-operator cooperation, it is highly likely that the \acp{RIS} will use the same standardized sequence of configurations during the channel estimation phase, which corresponds to $\mx{B}_1=\mx{B}_2=\mx{B}$.\footnote{The analysis in this paper can be easily extended to the case when $\mx{B}_1= \mx{U}\mx{B}_2$ for some unitary matrix $\mx{U}$, so that the configuration sequences have identical spans. It is the overlap of the spans that can cause issues.} In this case, \eqref{eq:gHatMLmismatch} becomes
\begin{align}
    \hat{\mx{g}}_k = \mx{g}_k + \mathbf{D}_{\mathbf{h}_k}^{-1}\mathbf{D}_{\mathbf{q}_k}\mathbf{p}_k+\frac{1}{L\sqrt{P}_p}\mathbf{D}_{\mx{h}_k}^{-1}\mathbf{B}^H\mathbf{w}_{pk}.
\end{align}
Since we consider the channels as deterministic parameters, we obtain the probability distribution 
\begin{equation}\label{eq:gHat_1}
    \hat{\mx{g}}_k\sim\mathcal{CN}\!\left(\mx{g}_k + \mathbf{D}_{\mathbf{h}_k}^{-1}\mathbf{D}_{\mathbf{q}_k}\mathbf{p}_k,\frac{\sigma_w^2}{LP_p}(\mathbf{D}_{\mathbf{h}_k}^H\mathbf{D}_{\mathbf{h}_k})^{-1}\!\right).
\end{equation}
We notice that $\hat{\mx{g}}_k$ is biased; that is, $\mx{b}_k\triangleq\mathbb{E}[\hat{\mx{g}}_k-\mx{g}_k]=\mathbf{D}_{\mathbf{h}_k}^{-1}\mathbf{D}_{\mathbf{q}_k}\mathbf{p}_k\neq\boldsymbol{0}$. The estimator bias does not vanish when increasing $P_p$ or $L$,  or decreasing $\sigma_w^2$. Hence, this estimator is not asymptotically unbiased. This is a new instance of an extensively studied phenomenon in the massive \ac{MIMO} literature, namely pilot contamination \cite{Marzetta2010a,Sanguinetti2020a}.
Interestingly, the \acp{RIS} cause pilot contamination even between two non-overlapping frequency bands, which has not been recognized in the previous literature.

\subsection{Case 2: 
The \acp{RIS} Adopt Different Configuration Sequences}
In this section, we consider the generic case of $\mx{B}_1\neq\mx{B}_2$. To motivate the proposed method for configuring $\mx{B}_1$ and $\mx{B}_2$, we first consider the case where the \acp{BS} are aware of the true signal model in \eqref{eq:rxPilotVec}, and therefore can estimate both $\mx{g}_k$ and $\mx{r}_k\triangleq\mx{D}_{\mx{q}_k}\mx{p_k}$. The resulting system model can be expressed as
\begin{subequations}\label{eq:fullCaseChanEst}
    \begin{align}
        &\mx{y}_{p1} = \sqrt{P}_p\begin{bmatrix}
            \mx{B}_1\mx{D}_{\mx{h}_1} & \mx{B}_2
        \end{bmatrix}\begin{bmatrix}
            \mx{g}_1\\\mx{r}_1
        \end{bmatrix} + \mx{w}_{p1},\\
        &\mx{y}_{p2} = \sqrt{P}_p\begin{bmatrix}
            \mx{B}_2\mx{D}_{\mx{h}_2} & \mx{B}_1
        \end{bmatrix}\begin{bmatrix}
            \mx{g}_2\\\mx{r}_2
        \end{bmatrix} + \mx{w}_{p2}.
    \end{align}    
\end{subequations}
In \eqref{eq:fullCaseChanEst}, a known linear transformation is applied to the parameter vector of interest in the presence of additive noise. Consequently, the \ac{ML} estimates of \ac{UE} 1's channels become
\begin{equation}\label{eq:fullCaseML}
    \begin{bmatrix}
        \hat{\mx{g}}_1\\\hat{\mx{r}}_1
    \end{bmatrix} = \frac{1}{\sqrt{P_p}}\begin{bmatrix}
        L\mx{D}_{\mx{h}_1}^H\mx{D}_{\mx{h}_1} & \mx{D}_{\mx{h}_1}^H\mx{B}_1^H\mx{B}_2\\
        \mx{B}_2^H\mx{B}_1\mx{D}_{\mx{h}_1} & L\mx{I}_{N}
    \end{bmatrix}^{-1}\begin{bmatrix}
        \mx{D}_{\mx{h}_1}^H\mx{B}_1^H \\ \mx{B}_2^H
    \end{bmatrix}\mx{y}_{p1}.
\end{equation}
Note that in this case, the total dimension of the unknown parameter vector is $2N$, hence, at least $2N$ independent observations are required for the matrix inverse to exist.\footnote{The pseudo-inverse could be used when there are fewer observations, but it will not provide a useful estimate.} The structure in \eqref{eq:fullCaseML} applies to \ac{UE} 2 with alternated indices, and it gives the \ac{ML} estimator, which is both unbiased and efficient, since \eqref{eq:fullCaseChanEst} is a linear observation model with additive Gaussian noise \cite[Th.~7.3]{kayestimation}. Hence, \eqref{eq:fullCaseML} is unbiased irrespective of other parameters such as $\sigma_w^2$, $L$, and $P_p$, and it achieves the \ac{CRLB}, which provides a lower bound on the \ac{MSE} of any unbiased estimator \cite{poorbook}. It has to be noted  that when $\mx{B}_1^H\mx{B}_2 = \boldsymbol{0}$, \eqref{eq:fullCaseML} becomes
\begin{align}\nonumber\label{eq:fullCaseMLOrth}
    \begin{bmatrix}
        \hat{\mx{g}}_1\\\hat{\mx{r}}_1
    \end{bmatrix} &= \frac{1}{L\sqrt{P_p}}\begin{bmatrix}
        \mx{D}_{\mx{h}_1}^H\mx{D}_{\mx{h}_1} & \boldsymbol{0}\\
        \boldsymbol{0} & \mx{I}_{N}
    \end{bmatrix}^{-1}\begin{bmatrix}
        \mx{D}_{\mx{h}_1}^H\mx{B}_1^H \\ \mx{B}_2^H
    \end{bmatrix}\mx{y}_{p1}\\
    &= \frac{1}{L\sqrt{P_p}}\begin{bmatrix}
        \mx{D}_{\mx{h}_1}^{-1}\mx{B}_1^H \\ \mx{B}_2^H
    \end{bmatrix}\mx{y}_{p1}.
\end{align}
Note that the expression for $\hat{\mx{g}}_1$ in \eqref{eq:fullCaseMLOrth} is the same as in \eqref{eq:gHatMLmismatch}. This shows that when $\mx{B}_1^H\mx{B}_2 = \boldsymbol{0}$, the \ac{MML} in \eqref{eq:gHatMLmismatch} coincides with the \ac{ML} estimator; that is, the misspecified model is sufficient when the configuration sequences are designed to alleviate pilot interference because the missing terms anyway vanish in the receiver processing. The probability distribution of $\hat{\mx{g}}_k$ in this case can be expressed as
\begin{equation}\label{eq:gHat_2}
    \hat{\mx{g}}_k\sim\mathcal{CN}\left(\mx{g}_k,\frac{\sigma_w^2}{LP_p}(\mathbf{D}_{\mathbf{h}_k}^H\mathbf{D}_{\mathbf{h}_k})^{-1}\right),
\end{equation}
which shows that choosing the \ac{RIS} configuration sequences such that $\mx{B}_1$ and $\mx{B}_2$ remove the bias from the \ac{MML} estimator. However, the major setback of this approach is that the minimum number of observations required for this  channel estimation procedure is $2N$ instead of $N$, due to the fact that the $2N$-many $L$-dimensional columns must all be mutually orthogonal, for which $L\geq2N$ is required. Considering that the estimator bias in \eqref{eq:gHat_1} does not vanish with increasing $L$, this is a necessary sacrifice. Hence, it has to be noted that the number of pilot transmissions increases linearly with the number of \acp{RIS} deployed in proximity. 
\subsection{\ac{MSE} During Channel Estimation}
The estimation error can be quantified through the \ac{MSE}, which is the trace of the error covariance matrix. We will derive the \ac{MSE} in this section. In this derivation, we do not assume a particular choice of $\mx{B}_1$ and $\mx{B}_2$, but we utilize the basic assumption $\mx{B}_k\mx{B}_k^H=L\mx{I}_N$. Consequently, we use $\mx{b}_k$ to denote the potential  estimator bias. We can then compute the error covariance matrix as
\begin{align}\nonumber
    \boldsymbol{\Sigma}_{e,k} &= \mathbb{E}\left[(\hat{\mx{g}}_k-\mx{g}_k)(\hat{\mx{g}}_k-\mx{g}_k)^H\right]\\
    &=\mx{b}_k\mx{b}_k^H+\frac{1}{LP_p}\mathbb{E}\left[\mx{D}_{\mx{h}_k}^{-1}\mx{w}_{pk}\mx{w}_{pk}^H\mx{D}_{\mx{h}_k}^{-H}\right]\nonumber\\
    &=\mx{b}_k\mx{b}_k^H+\frac{\sigma_w^2}{LP_p}\left(\mx{D}_{\mx{h}_k}\mx{D}_{\mx{h}_k}^{H} \right)^{-1}.
\end{align}
Consequently, the trace of the error covariance matrix becomes
\begin{equation}\label{eq:traceError}
    \mathrm{tr}(\boldsymbol{\Sigma}_{e,k}) = \|\mx{b}_k\|^2+\frac{\sigma_w^2}{LP_p}\sum_{n=1}^N\frac{1}{|h_{kn}|^2}.
\end{equation}
Note that for high $P_p$, $L$, and/or low $\sigma_w^2$, the second term in \eqref{eq:traceError} vanishes, and the trace of the error covariance converges to $\|\mx{b}_k\|^2$ which depends on the configurations of $\mx{B}_1,\mx{B}_2$. For the two previously considered cases, we have
\begin{equation}\label{eq:chEstFloor}
    \|\mx{b}_k\|^2 = \begin{cases}
        \sum_{n=1}^{N}\frac{|r_{kn}|^2}{|h_{kn}|^2} & \mx{B}_1 = \mx{B}_2,\\
        0 & \mx{B}_1^H\mx{B}_2 = \boldsymbol{0}.
    \end{cases}
\end{equation}
This result shows that configuring the \ac{RIS}s such that $\mx{B}_1^H\mx{B}_2=\boldsymbol{0}$ removes the asymptotic floor on the average \ac{MSE}, which comes from the energy of the estimator bias. On the other hand, when the intended \ac{RIS}-\ac{BS} links $\mx{h}_1,\mx{h}_2$ are strong relative to the unintended and unknown overall link $\mx{r}_k$, the estimator bias will be weaker and the performance loss associated with choosing $\mx{B}_1=\mx{B}_2$ will be lower. Nevertheless, pilot contamination results in a fundamental error floor, even if the \acp{RIS} are utilized in different bands. In the next section, we consider the estimation of data based on the channel estimation performed in this section and analyze the consequence of pilot contamination in this phase.

\section{Data Signal Estimation with Deterministic Channels}\label{sec:DataTransmission}

The channel estimation is followed by data transmission over the same deterministic channel as in Section~\ref{sec:MLE}. The receiver can use the channel estimate derived in the last section when determining the transmitted signal. Practical channels are never fully deterministic but might have a long coherence time. Moreover, the impact of estimation errors is only relevant when the data packet has a modest size so we cannot afford to spend much resources on pilots. For this reason, we cannot consider the channel capacity as performance metric but will instead consider the \ac{MSE}.

Defining the data signal transmitted by the $k$th \ac{UE} as $x_k\sim\mathcal{CN}(0,1)$, we can express the received data as
\begin{subequations}\label{eq:rxData}
    \begin{align}
        &y_1 = \sqrt{P_d}(\mx{h}_1^T\hat{\boldsymbol{\Phi}}_1\mx{g}_1+\mx{q}_1^T\hat{\boldsymbol{\Phi}}_2\mx{p}_1)x_1 + w_1,\\
        &y_2 = \sqrt{P_d}(\mx{h}_2^T\hat{\boldsymbol{\Phi}}_2\mx{g}_2+\mx{q}_2^T\hat{\boldsymbol{\Phi}}_1\mx{p}_2)x_2 + w_2,
    \end{align}
\end{subequations}
where $w_k\sim\mathcal{CN}(0,\sigma_w^2)$ denotes the receiver noise, $P_d$ denotes the data transmission power, and the RIS configuration matrices $\hat{\boldsymbol{\Phi}}_k$ are selected based on the estimated channels to maximize the average gain of the desired cascaded channel as shown in \cite[Sec.~II]{Bjornson2022b}:
\begin{align}\nonumber
    &\hat{\phi}_{kn} = \arg(h_{kn}) + \arg(\hat{g}_{kn}),\\
    &\hat{\boldsymbol{\Phi}}_k = \mathrm{diag}\left(e^{-j\hat{\phi}_{k1}},\dots,e^{-j\hat{\phi}_{kN}}\right).\label{eq:RISphasematch}
\end{align}
However, since the \acp{BS} are unaware of the unintended reflections and base their data reception on the previously obtained channel estimates, they assume the following misspecified received data signal models:
\begin{subequations}\label{eq:rxDataHat}
    \begin{align}
        &\hat{y}_1 = \sqrt{P_d}\mx{h}_1^T\hat{\boldsymbol{\Phi}}_1\hat{\mx{g}}_1x_1 + w_1,\\
        &\hat{y}_2 = \sqrt{P_d}\mx{h}_2^T\hat{\boldsymbol{\Phi}}_2\hat{\mx{g}}_2x_2 + w_2.
    \end{align}
\end{subequations}
Introducing the notation $m_k\triangleq\sqrt{P_d}(\mx{h}_k^T\hat{\boldsymbol{\Phi}}_k\mx{g}_k+\mx{q}_k^T\hat{\boldsymbol{\Phi}}_j\mx{p}_k)$ for $j,k\in\{1,2\}, j\neq k$, and $\hat{m}_k\triangleq\sqrt{P_d}\mx{h}_k^T\hat{\boldsymbol{\Phi}}_k\hat{\mx{g}}_k$, \eqref{eq:rxData} and \eqref{eq:rxDataHat} can be expressed as
\begin{subequations}
    \begin{align}
        &y_k = m_kx_k+w_k,\quad k = 1,2,\label{eq:mData}\\
        &\hat{y}_k = \hat{m}_kx_k+w_k,\quad k = 1,2.\label{eq:mDataHat}
    \end{align}
\end{subequations}
Based on the misspecified observation model in \eqref{eq:mDataHat}, the \acp{BS} estimate $x_k$ by using the misspecified \ac{MMSE} estimator
\begin{equation}
    \hat{x}_k = \frac{\hat{m}_k^*}{|\hat{m}_k|^2+\sigma_w^2}y_k,\quad k = 1,2.
\end{equation}
In this section, we consider the \ac{MSE} between $x_k$ and $\hat{x}_k$ as the performance metric for the data transmission. We derive the data estimation \ac{MSE} for \ac{UE} $k$ as
\begin{align}\nonumber
    &\mathbb{E}\left[|x_k-\hat{x}_k|^2\right] = 1 + \mathbb{E}\left[|\hat{x}_k|^2\right]-2\mathrm{Re}(\mathbb{E}[x_k\hat{x}_k^*])\\
    &=1+\mathbb{E}\left[\frac{|\hat{m}_k|^2(|m_k|^2+\sigma_w^2)}{(|\hat{m}_k|^2+\sigma_w^2)^2}\right]\nonumber\\
    &-2\mathrm{Re}\left(\mathbb{E}\left[\frac{\hat{m}_km_k^*}{|\hat{m}_k|^2+\sigma_w^2}\right]\right)\nonumber\\
    &=1+\mathbb{E}\left[\frac{|\hat{m}_k|^2(|m_k|^2+\sigma_w^2)-2\mathrm{Re}(\hat{m}_1m_1^*)(|\hat{m}_k|^2+\sigma_w^2)}{(|\hat{m}_k|^2+\sigma_w^2)^2}\right]\nonumber\\
    &=1+\mathbb{E}\left[\frac{|\hat{m}_k|^2(|m_k|^2+\sigma_w^2)-2\mathrm{Re}(\hat{m}_1m_1^*)(|\hat{m}_k|^2+\sigma_w^2)}{(|\hat{m}_k|^2+\sigma_w^2)^2}\right]\nonumber\\
    &+\mathbb{E}\left[\frac{\sigma_w^2(|m_k|^2+\sigma_w^2)-\sigma_w^2(|m_k|^2+\sigma_w^2)}{(|\hat{m}_k|^2+\sigma_w^2)^2}\right]\nonumber\\
    &=1+\mathbb{E}\left[\frac{(|\hat{m}_k|^2+\sigma_w^2)(|m_k|^2+\sigma_w^2-2\mathrm{Re}(\hat{m}_km_k^*))}{(\hat{m}_k^2+\sigma_w^2)^2}\right]\nonumber\\
    &-\mathbb{E}\left[\frac{\sigma_w^2(|m_k|^2+\sigma_w^2)}{(\hat{m}_k^2+\sigma_w^2)^2}\right]\nonumber\\
    &=\mathbb{E}\left[\frac{|m_k-\hat{m}_k|^2+2\sigma_w^2}{|\hat{m}_k|^2+\sigma_w^2}\right]-\mathbb{E}\left[\frac{\sigma_w^2(|m_k|^2+\sigma_w^2)}{(|\hat{m}_k|^2+\sigma_w^2)^2}\right].\label{eq:dataMSE}
\end{align}
Defining $\epsilon_k\triangleq m_k-\hat{m}_k$, \eqref{eq:dataMSE} can be rewritten as
\begin{equation}\label{eq:dataMSE_v2}
    \mathbb{E}[|x_k-\hat{x}_k|^2] = \mathbb{E}\left[\frac{|\epsilon_k|^2+2\sigma_w^2}{|m_k-\epsilon_k|^2+\sigma_w^2}-\frac{\sigma_w^2(|m_k|^2+\sigma_w^2)}{(|m_k-\epsilon_k|^2+\sigma_w^2)^2}\right].
\end{equation}
To examine the impact of pilot contamination on the data estimation performance more clearly, we now consider  channel estimation at high \acp{SNR}, so that the estimation error only comes from the estimator bias, i.e., pilot contamination. This happens  when $L$ or $P_p$ is high and/or $\sigma_w^2$ is low, which results in the estimator covariances in \eqref{eq:gHat_1} and \eqref{eq:gHat_2} becoming zero.  For notational convenience, we consider the case where $P_p$ is arbitrarily large so that  $\lim_{P_p\to\infty}\hat{\mx{g}}_k = \mx{g}_k+\mx{b}_k$, where
\begin{equation}\label{eq:biasCases}
    \mx{b}_k = \begin{cases}
        \mx{D}_{\mx{h}_k}^{-1}\mx{D}_{\mx{q}_k}\mx{p}_k & \mx{B}_1=\mx{B}_2,\\
        \boldsymbol{0} & \mx{B}_1^H\mx{B}_2=\boldsymbol{0}.
    \end{cases}
\end{equation}

\subsection{Data \ac{MSE} with Channel Estimation at High \ac{SNR}}\label{sec:highSNRChanEstRef}
In \eqref{eq:dataMSE_v2}, $\epsilon_k$ and $m_k$ are functions of $\hat{\mx{g}}_1$ and $\hat{\mx{g}}_2$, therefore as $\hat{\mx{g}}_1$ and $\hat{\mx{g}}_2$ converge to their means, $\epsilon_k$ and $m_k$ become
\begin{subequations}\label{eq:barchannels}
\begin{align}
    &\overline{m}_k = \sqrt{P_d}(\mx{h}_k^T\Bar{\boldsymbol{\Phi}}_k\mx{g}_k+\mx{q}_k^T\Bar{\boldsymbol{\Phi}}_j\mx{p}_k),\\
    &\overline{\epsilon}_k = \sqrt{P_d}(\mx{q}_k^T\Bar{\boldsymbol{\Phi}}_j\mx{p}_k-\mx{h}_k^T\Bar{\boldsymbol{\Phi}}_k\mx{b}_k),
\end{align}    
\end{subequations}
for $j,k\in\{1,2\}$ and $j\neq k$. $\Bar{\boldsymbol{\Phi}}_k$ denotes the \ac{RIS} configuration computed according to \eqref{eq:RISphasematch} with $\hat{\mx{g}}_k=\mx{g}_k+\mx{b}_k$. At high \ac{SNR}, the MSE in \eqref{eq:dataMSE_v2} can be rewritten as
\begin{equation}\label{eq:dataMSE_highChan}
    \text{MSE} = \frac{|\overline{\epsilon}_k|^2+2\sigma_w^2}{|\overline{m}_k-\overline{\epsilon}_k|^2+\sigma_w^2}-\frac{\sigma_w^2(|\overline{m}_k|^2+\sigma_w^2)}{(|\overline{m}_k-\overline{\epsilon}_k|^2+\sigma_w^2)^2}.
\end{equation}
This is a practically achievable limit since RIS-aided systems require long pilot sequences over a narrow bandwidth, thus, the effective SNR (proportional to $P_p L$) during pilot transmission can be much larger than in the data transmission phase.

\subsection{Data \ac{MSE} with Transmission at 
High \ac{SNR}}
In the previous section, we obtained the expression for the data \ac{MSE} when  the channels are estimated with a high pilot \ac{SNR}, while the data transmission is done at an arbitrary SNR. To study the case when also the data transmission is conducted at a high SNR, we let $\sigma_w^2 \to 0$, which results in the limit
\begin{equation}\label{eq:highSNRfloorData}
    \lim_{\sigma_w^2\to 0}\text{MSE} = \frac{|\overline{\epsilon}_k|^2}{|\overline{m}_k-\overline{\epsilon}_k|^2}.
\end{equation}
Note that the resulting expression denotes the ratio between the estimated overall \ac{SISO} channel $\hat{m}_k$'s power and the mismatch parameter $\epsilon_k$'s power. Recall that $\mx{b}_k$ depends on which sequence of \ac{RIS} configurations is utilized. For $\mx{B}_1=\mx{B}_2$, we can obtain $\epsilon_k$ as
\begin{align}\nonumber
    \epsilon_k &= \sqrt{P_d}\mx{q}_k^T\hat{\boldsymbol{\Phi}}_j\mx{p}_k-\sqrt{P_d}\mx{h}_k^T\hat{\boldsymbol{\Phi}}_k\mx{D}_{\mx{h}_1}^{-1}\mx{D}_{\mx{q}_1}\mx{p}_1\nonumber\\
    &=\sqrt{P_d}\mx{q}_k^T\hat{\boldsymbol{\Phi}}_j\mx{p}_k-\sqrt{P_d}\hat{\boldsymbol{\phi}}_k\mx{D}_{\mx{h}_k}\mx{D}_{\mx{h}_k}^{-1}\mx{D}_{\mx{q}_1}\mx{p}_1\nonumber\\
    &=\sqrt{P_d}\mx{q}_k^T\hat{\boldsymbol{\Phi}}_j\mx{p}_k-\sqrt{P_d}\hat{\boldsymbol{\phi}}_k\mx{D}_{\mx{q}_1}\mx{p}_1\nonumber\\
    &=\sqrt{P_d}\mx{q}_k^T\hat{\boldsymbol{\Phi}}_j\mx{p}_k-\sqrt{P_d}\mx{q}_k^T\hat{\boldsymbol{\Phi}}_k\mx{p}_k\nonumber\\
    &=\sqrt{P_d}\mx{q}_k^T(\hat{\boldsymbol{\Phi}}_j-\hat{\boldsymbol{\Phi}}_k)\mx{p}_k.
\end{align}
On the other hand, $\mx{B}_1^H\mx{B}_2=\boldsymbol{0}$ removes $\mx{b}_k$ for $k\in\{1,2\}$. Consequently, we have
\begin{equation}\label{eq:epsilonExplicit}
    \epsilon_k = \begin{cases}
        \sqrt{P_d}\mx{q}_k^T(\hat{\boldsymbol{\Phi}}_j-\hat{\boldsymbol{\Phi}}_k)\mx{p}_k & \mx{B}_1=\mx{B}_2,\\
        \sqrt{P_d}\mx{q}_k^T\hat{\boldsymbol{\Phi}}_j\mx{p}_k & \mx{B}_1^H\mx{B}_2=\boldsymbol{0}.
    \end{cases}
\end{equation}
We notice that $\epsilon_k$ corresponds to only the unintended reflection path when $\mx{B}_1^H\mx{B}_2=\boldsymbol{0}$. On the other hand, $\mx{B}_1=\mx{B}_2$ yields an expression depending on the difference between the two \acp{RIS}' configurations during data transmission. Since each \ac{RIS} is configured based on the (estimated) channels of their respective users, it is highly unlikely that the configurations will be close. Moreover, it has to be noted that the \ac{RIS} configuration of the non-serving \ac{RIS} is different between the two cases since the channel estimates are also different.

\section{Channel Estimation based on Correlated Rayleigh Fading Priors}
\label{sec:RayleighChan}

We now switch focus to consider fading channels that can be modeled using the Bayesian framework.
In this section, we consider channel estimation and assume that all the \ac{UE}-\ac{RIS} channels exhibit spatially correlated Rayleigh fading: $\mx{g}_k\sim\mathcal{CN}(\boldsymbol{0},\boldsymbol{\Sigma}_{\mx{g}_k})$ and $\mx{p}_k\sim\mathcal{CN}(\boldsymbol{0},\boldsymbol{\Sigma}_{\mx{p}_k})$. 
The covariance matrices $\boldsymbol{\Sigma}_{\mx{g}_k},\boldsymbol{\Sigma}_{\mx{p}_k} \in \mathbb{C}^{N \times N}$ are generic positive semi-definite matrices.
In addition, the \acp{BS} know $\mx{h}_k$ perfectly while they consider $\mx{q}_k$ as deterministic and unknown channels for $k = 1,2$. The pilot transmission model assumed by the \acp{BS} for $k = 1,2$ can be expressed as
\begin{subequations}\label{eq:misPilotBayes}
    \begin{align}
        &\hat{\mx{y}}_{p1}=\sqrt{P_p}\mx{B}_1\mx{D}_{\mx{h}_1}\mx{g}_{1}+\mx{w}_{p1}\in\mathbb{C}^L,\\
        &\hat{\mx{y}}_{p2}=\sqrt{P_p}\mx{B}_2\mx{D}_{\mx{h}_2}\mx{g}_{2}+\mx{w}_{p2}\in\mathbb{C}^L.
    \end{align}
\end{subequations}
Based on \eqref{eq:misPilotBayes}, the \acp{BS} can estimate $\mx{g}_k$ via a misspecified \ac{MMSE} estimator, which can be expressed as
\begin{subequations}
    \begin{align}
        \hat{\mathbf{g}}_1 &= \frac{1}{\sqrt{P_p}}\boldsymbol{\Sigma}_{\mx{g}_1}\mx{D}_{\mx{h}_1}^H\mathbf{B}_1^H\nonumber\\
        &\times
        \left(\mathbf{B}_1\mx{D}_{\mx{h}_1}\boldsymbol{\Sigma}_{\mx{g}_1}\mx{D}_{\mx{h}_1}^H\mathbf{B}_1^H+\frac{\sigma_w^2}{P_p}\mathbf{I}_L\right)^{-1}\mathbf{y}_{p1},\\
        \hat{\mathbf{g}}_2 &= \frac{1}{\sqrt{P_p}}\boldsymbol{\Sigma}_{\mx{g}_2}\mx{D}_{\mx{h}_2}^H\mathbf{B}_2^H\nonumber\\
        &\times\left(\mathbf{B}_2\mx{D}_{\mx{h}_2}\boldsymbol{\Sigma}_{\mx{g}_2}\mx{D}_{\mx{h}_2}^H\mathbf{B}_2^H+\frac{\sigma_w^2}{P_p}\mathbf{I}_L\right)^{-1}\mathbf{y}_{p2}.
    \end{align}
\end{subequations}
The diagonal entries of the error covariance matrix represent the \acp{MSE} of the corresponding channel entry. We first define $\mx{r}_k\triangleq\mx{D}_{\mx{q}_k}\mx{p}_k$, which results in $\mx{r}_k\sim\mathcal{CN}(\boldsymbol{0},\boldsymbol{\Sigma}_{\mx{r}_k})$ where $\boldsymbol{\Sigma}_{\mx{r}_k}\triangleq\mx{D}_{\mx{q}_k}\boldsymbol{\Sigma}_{\mx{p}_k}\mx{D}_{\mx{q}_k}^H$. To simplify the representation of the channel estimation error covariance matrix, we introduce the following notation:
\begin{subequations}
    \begin{align}
        &\mx{C}_{\mx{g}\hat{\mx{y}}}\triangleq \mathbb{E}[\mx{g}_k\hat{\mx{y}}_{pk}^H] =\mathbb{E}[\mx{g}_k\mx{y}_{pk}^H] =\sqrt{P_p}\boldsymbol{\Sigma}_{\mx{g}_k}\mx{D}_{\mx{h}_k}^H\mx{B}_k^H,\\
        &\mx{C}_{\hat{\mx{y}}\hat{\mx{y}}}\triangleq \mathbb{E}[\hat{\mx{y}}_{pk}\hat{\mx{y}}_{pk}^H] =P_p\mathbf{B}_k\mx{D}_{\mx{h}_k}\boldsymbol{\Sigma}_{\mathbf{g}_k}\mx{D}_{\mx{h}_k}^H\mathbf{B}_k^H+\sigma_w^2\mx{I}_L,\\
        &\mx{C}_{\mx{y}\mx{y}}\triangleq \mathbb{E}[\mx{y}_{pk}\mx{y}_{pk}^H] =\mx{C}_{\hat{\mx{y}}\hat{\mx{y}}} + P_p\mx{B}_j\boldsymbol{\Sigma}_{\mx{r}_k}\mx{B}_j^H.
    \end{align}
\end{subequations}
With this notation, the error covariance matrix can be represented as
\begin{align}\nonumber
    &\mathbb{E}[(\mathbf{g}_k-\hat{\mathbf{g}}_k)(\mathbf{g}_k-\hat{\mathbf{g}}_k)^H] = \\    &=\boldsymbol{\Sigma}_{\mathbf{g}_k}+\mx{C}_{\mx{g}\hat{\mx{y}}}\mx{C}_{\hat{\mx{y}}\hat{\mx{y}}}^{-1}\mx{C}_{\mx{y}\mx{y}}\mx{C}_{\hat{\mx{y}}\hat{\mx{y}}}^{-1}\mx{C}_{\mx{g}\hat{\mx{y}}}^H-2\mx{C}_{\mx{g}\hat{\mx{y}}}\mx{C}_{\hat{\mx{y}}\hat{\mx{y}}}^{-1}\mx{C}_{\mx{g}\hat{\mx{y}}}^H\nonumber\\
    &=\boldsymbol{\Sigma}_{\mathbf{g}_k}+\mx{C}_{\mx{g}\hat{\mx{y}}}\mx{C}_{\hat{\mx{y}}\hat{\mx{y}}}^{-1}(\mx{C}_{\hat{\mx{y}}\hat{\mx{y}}}+P_p\mx{B}_j\boldsymbol{\Sigma}_{\mx{r}_k}\mx{B}_j^H)\mx{C}_{\hat{\mx{y}}\hat{\mx{y}}}^{-1}\mx{C}_{\mx{g}\hat{\mx{y}}}^H\nonumber\\
    &-2\mx{C}_{\mx{g}\hat{\mx{y}}}\mx{C}_{\hat{\mx{y}}\hat{\mx{y}}}^{-1}\mx{C}_{\mx{g}\hat{\mx{y}}}^H\nonumber\\
    &=\underbrace{\boldsymbol{\Sigma}_{\mathbf{g}_k}-\mx{C}_{\mx{g}\hat{\mx{y}}}\mx{C}_{\hat{\mx{y}}\hat{\mx{y}}}^{-1}\mx{C}_{\mx{g}\hat{\mx{y}}}^H}_{\text{error covariance for } \mx{y}=\hat{\mx{y}}}+\underbrace{P_p\mx{C}_{\mx{g}\hat{\mx{y}}}\mx{C}_{\hat{\mx{y}}\hat{\mx{y}}}^{-1}\mx{B}_j\boldsymbol{\Sigma}_{\mx{r}_k}\mx{B}_j^H\mx{C}_{\hat{\mx{y}}\hat{\mx{y}}}^{-1}\mx{C}_{\mx{g}\hat{\mx{y}}}^H}_{\text{term coming from pilot contamination}}.\label{eq:chanEstMSE}
\end{align}
Note that the additive term in \eqref{eq:chanEstMSE} coming from pilot contamination depends on $\boldsymbol{\Sigma}_{\mx{r}_k}$. If we consider the case where the channel $\mx{r}_k$ does not exist, that is, $\boldsymbol{\Sigma}_{\mx{r}_k}=\boldsymbol{0}$, then we would obtain the error covariance in the form of a typical \ac{MMSE} estimation error covariance matrix. It is also important to address the dependence of the term coming from pilot contamination on $P_p$: while the terms $\mx{C}_{\mx{g}\hat{\mx{y}}}$ scale with $\sqrt{P_p}$ each, the terms $\mx{C}_{\hat{\mx{y}}\hat{\mx{y}}}^{-1}$ scale with $1/P_p$ each, and along with the leading $P_p$ multiplier, we can see that $P_p$-dependent terms cancel each other out, hence leaving a non-vanishing pilot contamination term.

\subsection{High-\ac{SNR} Channel Estimation}
Now, we investigate the behavior of the error covariance matrix in \eqref{eq:chanEstMSE} when $\sigma_w^2$ is low or $P_p$ is high. Note that the first term in \eqref{eq:chanEstMSE} is already the error covariance matrix for an \ac{MMSE} estimator without misspecification that estimates $\mx{g}_k$ based on the observation $\hat{\mx{y}}_{pk}$. Therefore, we know that this term vanishes at high \ac{SNR}. Consequently, the asymptotic error covariance matrix is governed by the high-\ac{SNR} behavior of the term coming from pilot contamination, that is
\begin{align}\nonumber
    &\lim_{\sigma_w^2\to 0}\mathbb{E}[(\mathbf{g}_k-\hat{\mathbf{g}}_k)(\mathbf{g}_k-\hat{\mathbf{g}}_k)^H]\\
    &=\lim_{\sigma_w^2\to 0}P_p\mx{C}_{\mx{g}\hat{\mx{y}}}\mx{C}_{\hat{\mx{y}}\hat{\mx{y}}}^{-1}\mx{B}_j\boldsymbol{\Sigma}_{\mx{r}_k}\mx{B}_j^H\mx{C}_{\hat{\mx{y}}\hat{\mx{y}}}^{-1}\mx{C}_{\mx{g}\hat{\mx{y}}}^H\nonumber\\
    &=P_p^2\boldsymbol{\Sigma}_{\mx{g}_k}\mx{D}_{\mx{h}_k}^H\mx{B}_k^H\left(P_p\mathbf{B}_k\mx{D}_{\mx{h}_k}\boldsymbol{\Sigma}_{\mathbf{g}_k}\mx{D}_{\mx{h}_k}^H\mathbf{B}_k^H+\sigma_w^2\mx{I}_L\right)^{-1}\nonumber\\
    &\times\mx{B}_j\boldsymbol{\Sigma}_{\mx{r}_k}\mx{B}_j^H\left(P_p\mathbf{B}_k\mx{D}_{\mx{h}_k}\boldsymbol{\Sigma}_{\mathbf{g}_k}\mx{D}_{\mx{h}_k}^H\mathbf{B}_k^H+\sigma_w^2\mx{I}_L\right)^{-1}\mx{B}_k\mx{D}_{\mx{h}_k}\boldsymbol{\Sigma}_{\mx{g}_k}\nonumber\\
    &=\lim_{\sigma_w^2\to 0}\boldsymbol{\Sigma}_{\mx{g}_k}\mx{D}_{\mx{h}_k}^H\mx{B}_k^H\left(\mathbf{B}_k\mx{D}_{\mx{h}_k}\boldsymbol{\Sigma}_{\mathbf{g}_k}\mx{D}_{\mx{h}_k}^H\mathbf{B}_k^H+\frac{\sigma_w^2}{P_p}\mx{I}_L\right)^{-1}\nonumber\\
    &\times\mx{B}_j\boldsymbol{\Sigma}_{\mx{r}_k}\mx{B}_j^H\left(\mathbf{B}_k\mx{D}_{\mx{h}_k}\boldsymbol{\Sigma}_{\mathbf{g}_k}\mx{D}_{\mx{h}_k}^H\mathbf{B}_k^H+\frac{\sigma_w^2}{P_p}\mx{I}_L\right)^{-1}\mx{B}_k\mx{D}_{\mx{h}_k}\boldsymbol{\Sigma}_{\mx{g}_k}.
\label{eq:chanEstMSEHighSNR}
\end{align}
Note that since $\mx{B}_k^H\mx{B}_k = L\mx{I}_N$, the pseudoinverse corresponds to $\mx{B}_k^\dagger = \frac{1}{L}\mx{B}_k^H$. Consequently, we have
\begin{align}\nonumber
    &\lim_{\sigma_w^2\to 0}\mathbb{E}[(\mathbf{g}_k-\hat{\mathbf{g}}_k)(\mathbf{g}_k-\hat{\mathbf{g}}_k)^H]\\
    &=\frac{1}{L^4}\boldsymbol{\Sigma}_{\mx{g}_k}\mx{D}_{\mx{h}_k}^H\mx{B}_k^H\mx{B}_k\mx{D}_{\mx{h}_k}^{-H}\boldsymbol{\Sigma}_{\mathbf{g}_k}^{-1}\mx{D}_{\mx{h}_k}^{-1}\mathbf{B}_k^H\nonumber\\
    &\quad \times\mx{B}_j\boldsymbol{\Sigma}_{\mx{r}_k}\mx{B}_j^H\mx{B}_k\mx{D}_{\mx{h}_k}^{-H}\boldsymbol{\Sigma}_{\mathbf{g}_k}^{-1}\mx{D}_{\mx{h}_k}^{-1}\mathbf{B}_k^H\mx{B}_k\mx{D}_{\mx{h}_k}\boldsymbol{\Sigma}_{\mx{g}_k}\nonumber\\
    &=\frac{1}{L^2}\mx{D}_{\mx{h}_k}^{-1}\mathbf{B}_k^H\mx{B}_j\boldsymbol{\Sigma}_{\mx{r}_k}\mx{B}_j^H\mathbf{B}_k\mx{D}_{\mx{h}_k}^{-H}.\label{eq:pilotContaminationHighSNR}
\end{align}
From \eqref{eq:pilotContaminationHighSNR}, one can observe that the asymptotic behavior of the pilot contamination depends on the choice of $\mx{B}_1$ and $\mx{B}_2$. To analyze this, we consider two cases.
\subsubsection{Case 1: $\mx{B}_1=\mx{B}_2$}
In this case, we have that $\mx{B}_k^H\mx{B}_j=L\mx{I}_N$ for $j,k=1,2$ and $j\neq k$. Consequently, \eqref{eq:pilotContaminationHighSNR} becomes
\begin{align}\label{eq:case1Contamination}
    &\frac{1}{L^2}\mx{D}_{\mx{h}_k}^{-1}\mathbf{B}_k^H\mx{B}_j\boldsymbol{\Sigma}_{\mx{r}_k}\mx{B}_j^H\mathbf{B}_k\mx{D}_{\mx{h}_k}^{-H}=\mx{D}_{\mx{h}_k}^{-1}\boldsymbol{\Sigma}_{\mx{r}_k}\mx{D}_{\mx{h}_k}^{-H}.
\end{align}
Note that this result does not depend on $L$, which implies that this channel estimation error caused by pilot contamination cannot be eliminated by increasing the number of pilots when $\mx{B}_1 = \mx{B}_2$.
\subsubsection{The \acp{RIS} are configured such that $\mx{B}_1^H\mx{B}_2 = \boldsymbol{0}$}
In this case, $\mx{B}_k^H\mx{B}_j = \boldsymbol{0}$ for $k\neq j$ and $k,j\in\{1,2\}$. This implies that \eqref{eq:pilotContaminationHighSNR} is zero; thus,  \ac{BS} $k$ can estimate $\mx{g}_k$ even without being aware of $\mx{r}_k$. Also note that \eqref{eq:pilotContaminationHighSNR} implies that any choice of $\mx{B}_1$ and $\mx{B}_2$ that does not satisfy $\mx{B}_1^H\mx{B}_2=\boldsymbol{0}$ will result in pilot contamination while estimating correlated Rayleigh fading channels. We can summarize the high-\ac{SNR} behavior of the channel estimation error covariance matrix as
\begin{align}\nonumber
    \lim_{\sigma_w^2\to 0}\mathbb{E}[(\mathbf{g}_k-\hat{\mathbf{g}}_k)&(\mathbf{g}_k-\hat{\mathbf{g}}_k)^H]\\
    &=\begin{cases}
        \mx{D}_{\mx{h}_k}^{-1}\boldsymbol{\Sigma}_{\mx{r}_k}\mx{D}_{\mx{h}_k}^{-H} & \mx{B}_1 = \mx{B}_2,\\
        \boldsymbol{0} & \mx{B}_1^H\mx{B}_2=\boldsymbol{0}.\\
    \end{cases}\label{eq:pilotContaminationSummary}
\end{align}
This result shows that in order to estimate $\mx{g}_k$ reliably, it is necessary to configure the \acp{RIS} such that $\mx{B}_1^H\mx{B}_2 = \boldsymbol{0}$.

\section{Capacity Lower Bound for Reliable Communication under Imperfect \ac{CSI}}\label{sec:capBound}
In this section, we compute a lower bound on the ergodic capacity based on the imperfect \ac{CSI} obtained in the previous section via channel estimation. In particular, we consider the impact of pilot contamination and the effect of the signal model misspecification on the channel capacity. We derive the channel capacity lower bound for the two cases considered in Section \ref{sec:RayleighChan}.

\subsection{Capacity Lower Bound of a \ac{SISO} Channel with Channel Side Information}
Consider a generic \ac{SISO} system with the following received signal model:
\begin{equation}
    y = hx+w
\end{equation}
with $w\sim\mathcal{CN}(0,\sigma_w^2)$ and $x\sim\mathcal{CN}(0,1)$. Suppose that the receiver has partial information on $h$, denoted by $\Omega$. Then the capacity lower bound is \cite[Eq. 2.46]{MarzettaBook}
\begin{equation}\label{eq:capacityBound}
    C\geq\mathbb{E}_{\Omega}\left[\log_2\left(1+\frac{|\mathbb{E}[h|\Omega]|^2}{\text{Var}(h|\Omega)+\text{Var}(w|\Omega)}\right)\right].
\end{equation}
This bound is valid under certain conditions, which can be listed as follows \cite[Section 2.3.5]{MarzettaBook}:
\begin{itemize}
    \item The noise $w$ has zero-mean conditioned on $\Omega$, that is, $\mathbb{E}[w|\Omega]=0$.
    \item The transmitted signal $x$ and the noise $w$ are uncorrelated conditioned on $\Omega$, that is, $\mathbb{E}[xw^*|\Omega]=\mathbb{E}[x|\Omega]\mathbb{E}[w^*|\Omega]$.
    \item The received signal $hx$ and the noise $w$ are uncorrelated conditioned on $\Omega$, that is, $\mathbb{E}[hxw^*|\Omega]=\mathbb{E}[hx|\Omega]\mathbb{E}[w^*|\Omega]$.
\end{itemize}
In our setup, the data signal model for the two users can be expressed as
\begin{subequations}\label{eq:RayleighDataSignal}
    \begin{align}
        &y_1 = \sqrt{P_d}(\mx{h}_1^T\boldsymbol{\Phi}_1\mx{g}_1 + \mx{q}_1^T\boldsymbol{\Phi}_2\mx{p}_1)x_1+w_1,\\
        &y_2 = \sqrt{P_d}(\mx{h}_2^T\boldsymbol{\Phi}_2\mx{g}_2 + \mx{q}_2^T\boldsymbol{\Phi}_1\mx{p}_2)x_2+w_2,
    \end{align}
\end{subequations}
where $x_k\sim\mathcal{CN}(0,1)$ denotes the transmitted data for $k=1,2$ and $P_d$ denotes the data transmission power. During the data transmission phase, both \acp{RIS} are configured to phase-align the cascaded channel, that is:
\begin{equation}\label{eq:RayleighRISConfig}
    \phi_{kn} = \arg(h_{kn})+\arg(\hat{g}_{kn}).
\end{equation}
For both \acp{BS}, we consider the side information $\Omega$ as the knowledge of $\boldsymbol{\Phi}_1$, $\boldsymbol{\Phi}_2$, $\mx{h}_1$, $\mx{h}_2$, $\hat{\mx{g}}_1$, and $\hat{\mx{g}}_2$. Consequently, the outer expectation in \eqref{eq:capacityBound} refers to the expectation with respect to the marginal distributions of $\hat{\mx{g}}_1$ and $\hat{\mx{g}}_2$.
\begin{lem}
    The system setup described by the signal model in \eqref{eq:RayleighDataSignal} that uses the \ac{RIS} configurations described in \eqref{eq:RayleighRISConfig} satisfies the three regularity conditions required by the capacity bound in \eqref{eq:capacityBound}.
\end{lem}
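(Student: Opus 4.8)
The plan is to recognize each per-user data model in \eqref{eq:RayleighDataSignal} as an instance of the generic \ac{SISO} model $y=hx+w$ underlying \eqref{eq:capacityBound}, and then to verify the three bullet conditions one at a time. For \ac{UE} $k$, I would set $h=\sqrt{P_d}(\mx{h}_k^T\boldsymbol{\Phi}_k\mx{g}_k+\mx{q}_k^T\boldsymbol{\Phi}_j\mx{p}_k)$, $x=x_k$, and $w=w_k$, with $j\neq k$. The structural fact on which all three conditions rest is the independence pattern of the data-phase noise: $w_k\sim\mathcal{CN}(0,\sigma_w^2)$ is drawn independently of the data symbol $x_k$, of the true channels $\mx{g}_k,\mx{p}_k$ (hence of the effective channel $h$), and of every quantity comprising the side information $\Omega=\{\boldsymbol{\Phi}_1,\boldsymbol{\Phi}_2,\mx{h}_1,\mx{h}_2,\hat{\mx{g}}_1,\hat{\mx{g}}_2\}$. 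In particular, the estimates $\hat{\mx{g}}_k$ are functions of the pilot-phase observations, whose noise $\mx{w}_{pk}$ is independent of the data-phase noise $w_k$.

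First I would establish the zero-mean condition. Since $w_k$ is independent of $\Omega$, conditioning leaves its law unchanged, so $\mathbb{E}[w_k\mid\Omega]=\mathbb{E}[w_k]=0$, which settles the first bullet.

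Next, for the remaining two conditions I would exploit that, even after conditioning on $\Omega$, the noise $w_k$ stays independent of the pair $(h,x_k)$. For the second condition, independence of $w_k$ from $(x_k,\Omega)$ yields $\mathbb{E}[x_k w_k^*\mid\Omega]=\mathbb{E}[x_k\mid\Omega]\,\mathbb{E}[w_k^*\mid\Omega]$ directly, and both sides in fact vanish because $\mathbb{E}[w_k^*\mid\Omega]=0$. For the third condition, the same factorization applies to the product $h x_k$: since $w_k$ is independent of $(h x_k,\Omega)$, we obtain $\mathbb{E}[h x_k w_k^*\mid\Omega]=\mathbb{E}[h x_k\mid\Omega]\,\mathbb{E}[w_k^*\mid\Omega]=0$, which matches the right-hand side of the third bullet.

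The computation is essentially bookkeeping once the independence structure is in place; there is no inequality or limit to control. The one point requiring care—and the place where a loosely stated argument could break—is justifying that $w_k$ is independent of $\Omega$: one must observe that $\hat{\mx{g}}_1,\hat{\mx{g}}_2$ and the induced \ac{RIS} configurations $\boldsymbol{\Phi}_1,\boldsymbol{\Phi}_2$ (through \eqref{eq:RayleighRISConfig}) are measurable functions of the pilot-phase data alone, and therefore carry no information about the fresh data-phase noise $w_k$. Once this separation between the estimation and transmission phases is made explicit, all three regularity conditions reduce to the zero mean and independence of $w_k$, and the lemma follows.
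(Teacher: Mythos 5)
Your proposal is correct and follows essentially the same route as the paper: identify $h=\sqrt{P_d}(\mx{h}_k^T\boldsymbol{\Phi}_k\mx{g}_k+\mx{q}_k^T\boldsymbol{\Phi}_j\mx{p}_k)$ and $w=w_k$, observe that the data-phase noise is independent of the side information $\Omega$ (because $\hat{\mx{g}}_k$ and hence $\boldsymbol{\Phi}_k$ depend only on the pilot-phase observations, whose noise is independent of $w_k$), and let all three conditions follow by factorization of conditional expectations. If anything, you spell out the third condition slightly more explicitly than the paper does, but the underlying argument is identical.
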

\begin{proof}
    First, let us identify the $h$ and $w$ that we had defined in our system setup in \eqref{eq:RayleighDataSignal}:
    \begin{subequations}
        \begin{align}
            &h = \sqrt{P_d}(\mx{h}_k^T\boldsymbol{\Phi}_k\mx{g}_k+\mx{q}_k^T\boldsymbol{\Phi}_j\mx{p}_k)\\
            &w = w_k
        \end{align}
    \end{subequations}
    We can now prove that the conditional mean of the noise conditioned on the channel side information, which we can describe as $\Omega=\hat{\mx{g}}_k$, is zero. Note that $\hat{\mx{g}}_k$ and $w_k$ are independent: for $\mx{B}_1=\mx{B}_2$, the less trivial case, we have that $\hat{\mx{g}_k}=\mx{g}_k+\mx{D}_{\mx{h}_k}^{-1}\mx{r}_k$, that is, no dependence on the noise. At lower \ac{SNR}s, the vanishing components contain the realizations of the noise received during channel estimation, and considering the fact that the receiver noise is white over time, that does not affect the independence between $\hat{\mx{g}}_k$ and $w_k$ either. Therefore, $\mathbb{E}[w_k|\hat{\mx{g}}_k] = \mathbb{E}[w_k]=0$.

    It is also straightforward to prove that the transmitted signal is uncorrelated with the noise conditioned on $\Omega$ due to the fact that $x_k$ and $w_k$ are independent of $\Omega$ individually. So the expression $\mathbb{E}[x_kw_k^*|\Omega]$ does not have anything that depends on $\Omega$, i.e., $\mathbb{E}[x_kw_k^*|\Omega] = \mathbb{E}[x_kw_k^*]=\mathbb{E}[x_k]\mathbb{E}[w_k^*]=\mathbb{E}[x_k|\Omega]\mathbb{E}[w_k^*|\Omega]$ can be obtained, proving that $x$ and $w$ are uncorrelated with each other conditioned on $\Omega$.

    The last point is also quite straightforward since $\Omega$ does not contain any relations between the overall channel, the receiver noise, and the transmitted signal. Therefore it is easy to claim that this regularity condition also holds, hence the capacity bound provided in \eqref{eq:capacityBound} is applicable to our system setup.
\end{proof}
\subsection{Capacity Lower Bound with High-\ac{SNR} Channel Estimates Available at the \acp{BS}}
If we assume that the channel estimation is performed at a high \ac{SNR}, we can model the channel estimation error according to \eqref{eq:pilotContaminationSummary}. Consequently, we can express the channel estimates in terms of the true channels and the channel estimation error as
\begin{align}\label{eq:chanEst_ErrModel}
    \hat{\mx{g}}_k = \mx{g}_k + \mx{e}_k,
\end{align}
where the channel estimation error is
\begin{align}
    \mx{e}_k = \begin{cases}
        \boldsymbol{0} & \mx{B}_1^H\mx{B}_2=\boldsymbol{0},\\
        \mx{D}_{\mx{h}_k}^{-1}\mx{r}_k & \mx{B}_1=\mx{B}_2.
    \end{cases}\label{eq:CSI_error}
\end{align}
Hence, \ac{BS} $k$ knows $\mx{g}_k$ perfectly if the \acp{RIS} are configured such that $\mx{B}_1^H\mx{B}_2 = \boldsymbol{0}$ during channel estimation, and $\mx{e}_k=\mx{D}_{\mx{h}_k}^{-1}\mx{r}_k$ when $\mx{B}_1=\mx{B}_2$. Also note that even when we have $\mx{B}_1=\mx{B}_2$, the channel and the channel estimation error are independent. Consequently, we can rewrite the overall \ac{SISO} channel as
\begin{equation}
    v_k\triangleq \sqrt{P_d}(\boldsymbol{\phi}_k^T\mx{D}_{\mx{h}_k}\mx{g}_k + \boldsymbol{\phi}_j^T\mx{r}_k).
\end{equation}
The mean of the overall \ac{SISO} channel conditioned on the side information can be expressed as
\begin{align}
    \mathbb{E}[v_k|\Omega] &= \sqrt{P_d}(\boldsymbol{\phi}_k^T\mx{D}_{\mx{h}_k}\mathbb{E}[\mx{g}_k|\hat{\mx{g}}_k] + \boldsymbol{\phi}_j^T\mathbb{E}[\mx{r}_k|\hat{\mx{g}}_k])
\end{align}
Here, we can utilize the channel estimate structure provided by \eqref{eq:chanEst_ErrModel} and \eqref{eq:CSI_error}. Note that when we consider $\mathbb{E}[\mx{g}_k|\hat{\mx{g}}_k]$, it can be thought as estimating $\mx{g}_k$ based on observing $\hat{\mx{g}}_k$ since both $\mx{g}_k$ and $\mx{e}_k$ and they are independent from each other. The same goes for computing $\mathbb{E}[\mx{r}_k|\hat{\mx{g}}_k]$. Since the $\mx{g}_k-\hat{\mx{g}}_k$ and $\mx{r}_k-\hat{\mx{g}}_k$ are jointly Gaussian, the \ac{MMSE} estimate, also known as the conditional mean estimate coincides with the \ac{LMMSE} estimate, therefore, we can use the \ac{LMMSE} formulation here:
\begin{subequations}
    \begin{align}
        &\mathbb{E}[\mx{g}_k|\hat{\mx{g}}_k] = \mathbb{E}[\mx{g}_k\hat{\mx{g}}_k^H](\mathbb{E}[\hat{\mx{g}}_k\hat{\mx{g}}_k^H])^{-1}\hat{\mx{g}}_k\\
        &\mathbb{E}[\mx{r}_k|\hat{\mx{g}}_k] = \mathbb{E}[\mx{r}_k\hat{\mx{g}}_k^H](\mathbb{E}[\hat{\mx{g}}_k\hat{\mx{g}}_k^H])^{-1}\hat{\mx{g}}_k
    \end{align}
\end{subequations}
which can be expressed more explicitly as
\begin{subequations}\label{eq:condExp}
    \begin{align}
        &\mathbb{E}[\mx{g}_k|\hat{\mx{g}}_k] = \boldsymbol{\Sigma}_{\mx{g}_k}(\boldsymbol{\Sigma}_{\mx{g}_k}+\mx{D}_{\mx{h}_k}^{-1}\boldsymbol{\Sigma}_{\mx{r}_k}\mx{D}_{\mx{h}_k}^{-H})^{-1}\hat{\mx{g}}_k,\\
        &\mathbb{E}[\mx{r}_k|\hat{\mx{g}}_k] = \nonumber\\&\begin{cases}
            \boldsymbol{\Sigma}_{\mx{r}_k}\mx{D}_{\mx{h}_k}^{-H}(\boldsymbol{\Sigma}_{\mx{g}_k}+\mx{D}_{\mx{h}_k}^{-1}\boldsymbol{\Sigma}_{\mx{r}_k}\mx{D}_{\mx{h}_k}^{-H})^{-1}\hat{\mx{g}}_k & \mx{B}_1=\mx{B}_2,\\
            \boldsymbol{0} & \mx{B}_1^H\mx{B}_2 = \boldsymbol{0}.
        \end{cases}        
    \end{align}
\end{subequations}
On the other hand, the variance of $v_k$ conditioned on $\hat{\mx{g}}_k$ can be expressed as
\begin{align}\nonumber
    &\text{Var}(v_k|\hat{\mx{g}}_k)=P_d\boldsymbol{\phi}_k^T\mx{D}_{\mx{h}_k}\text{Var}(\mx{g}_k|\hat{\mx{g}}_k)\mx{D}_{\mx{h}_k}^H\boldsymbol{\phi}_k^*\\
    &+P_d\boldsymbol{\phi}_j^T\text{Var}(\mx{r}_k|\hat{\mx{g}}_k)\boldsymbol{\phi}_j^*+2\text{Re}(\boldsymbol{\phi}_k^T\mx{D}_{\mx{h}_k}\mathbb{E}[\mx{g}_k\mx{r}_k^H|\hat{\mx{g}}_k]\boldsymbol{\phi}_j^*)\nonumber\\
    &-2\text{Re}(\boldsymbol{\phi}_k^T\mx{D}_{\mx{h}_k}\mathbb{E}[\mx{g}_k|\hat{\mx{g}}_k]\mathbb{E}[\mx{r}_k^H|\hat{\mx{g}}_k]\boldsymbol{\phi}_j^*)\label{eq:condVar_v1}
\end{align}
where $\mathbb{E}[\mx{g}_k|\hat{\mx{g}}_k]$ and $\mathbb{E}[\mx{r}_k|\hat{\mx{g}}_k]$ are provided by \eqref{eq:condExp}. In addition, we can use \ac{LMMSE} formulation results for $\text{Var}(\mx{g}_k|\hat{\mx{g}}_k)$ and $\text{Var}(\mx{r}_k|\hat{\mx{g}}_k)$ which correspond to the error covariance matrices as a result of estimating $\mx{g}_k$ and $\mx{r}_k$ with an \ac{LMMSE} estimator based on the observation $\hat{\mx{g}}_k$. Consequently, these two terms can be expressed as
\begin{subequations}
    \begin{align}
        &\text{Var}(\mx{g}_k|\hat{\mx{g}}_k) =\mathbb{E}[\mx{g}_k\mx{g}_k^H]-\mathbb{E}[\mx{g}_k\hat{\mx{g}}_k^H](\mathbb{E}[\mx{g}_k\hat{\mx{g}}_k^H])^{-1}\mathbb{E}[\hat{\mx{g}}_k\mx{g}_k^H],\\
        &\text{Var}(\mx{r}_k|\hat{\mx{g}}_k) =\mathbb{E}[\mx{r}_k\mx{r}_k^H]-\mathbb{E}[\mx{r}_k\hat{\mx{g}}_k^H](\mathbb{E}[\mx{r}_k\hat{\mx{g}}_k^H])^{-1}\mathbb{E}[\hat{\mx{g}}_k\mx{r}_k^H].
    \end{align}
\end{subequations}
Computing the expectations above and also the cross-term $\mathbb{E}[\mx{g}_k\mx{r}_k^H|\hat{\mx{g}}_k]$, we can obtain the implicit expressions in \eqref{eq:condVar_v1} as
\begin{subequations}
    \begin{align}
        &\text{Var}(\mx{g}_k|\hat{\mx{g}}_k) =\boldsymbol{\Sigma}_{\mx{g}_k}-\boldsymbol{\Sigma}_{\mx{g}_k}(\boldsymbol{\Sigma}_{\mx{g}_k}+\mx{D}_{\mx{h}_k}^{-1}\boldsymbol{\Sigma}_{\mx{r}_k}\mx{D}_{\mx{h}_k}^{-H})^{-1}\boldsymbol{\Sigma}_{\mx{g}_k}\\
        &\text{Var}(\mx{r}_k|\hat{\mx{g}}_k) \nonumber\\&=\boldsymbol{\Sigma}_{\mx{r}_k}- \boldsymbol{\Sigma}_{\mx{r}_k}\mx{D}_{\mx{h}_k}^{-H}(\boldsymbol{\Sigma}_{\mx{g}_k}+\mx{D}_{\mx{h}_k}^{-1}\boldsymbol{\Sigma}_{\mx{r}_k}\mx{D}_{\mx{h}_k}^{-H})^{-1}\mx{D}_{\mx{h}_k}^{-1} \boldsymbol{\Sigma}_{\mx{r}_k}\\
        &\mathbb{E}[\mx{g}_k\mx{r}_k^H|\hat{\mx{g}}_k]\nonumber\\&= \hat{\mx{g}}_k\hat{\mx{g}}_k^H(\boldsymbol{\Sigma}_{\mx{g}_k}+\mx{D}_{\mx{h}_k}^{-1}\boldsymbol{\Sigma}_{\mx{r}_k}\mx{D}_{\mx{h}_k}^{-H})^{-1}\boldsymbol{\Sigma}_{\mx{r}_k}\mx{D}_{\mx{h}_k}^{-H}-\mx{D}_{\mx{h}_k}^{-1}\boldsymbol{\Sigma}_{\mx{r}_k}
    \end{align}
\end{subequations}
for $\mx{B}_1=\mx{B}_2$ and
\begin{subequations}
    \begin{align}
        &\text{Var}(\mx{g}_k|\hat{\mx{g}}_k)=\boldsymbol{0}\\
        &\text{Var}(\mx{r}_k|\hat{\mx{g}}_k)=\boldsymbol{\Sigma}_{\mx{r}_k}\\
        &\mathbb{E}[\mx{g}_k\mx{r}_k^H|\hat{\mx{g}}_k]=\boldsymbol{0}
    \end{align}
\end{subequations}
for $\mx{B}_1^H\mx{B}_2=\boldsymbol{0}$. Obtaining $\text{Var}(w_k|\Omega)=\sigma_w^2$ is straightforward since $w_k$ is independent from $\Omega$. Also note that $\hat{\mx{g}}_k$ depends on the choice of $\mx{B}_1$ and $\mx{B}_2$. As a result, the capacity lower bound can be expressed as
\begin{align}
    C_k\geq \mathbb{E}_\Omega\left[\log_2\left(1+\frac{|\mathbb{E}[v_k|\Omega]|^2}{\text{Var}(v_k|\Omega)+\sigma_w^2}\right)\right]
\end{align}
with $\mathbb{E}[v_k|\Omega]$ and $\text{Var}(v_k|\Omega)$ taking values according to the choice of $\mx{B}_1$ and $\mx{B}_2$.

\section{Numerical Results}\label{sec:NumRes}
In this section, we provide numerical examples to demonstrate the implications of the analytical results obtained in Sections \ref{sec:MLE}-\ref{sec:capBound}. First, we provide the numerical results for the channel and data estimation \acp{MSE} when deterministic channels are considered. For correlated Rayleigh fading, we demonstrate the impact of pilot contamination on channel estimation \ac{MSE} and the resulting capacity lower bound.

\subsection{Estimation Performance With Deterministic Channels}\label{sec:NumResChanEstDet}

For deterministic channel estimation, we consider the \ac{NMSE} as our performance metric. Moreover, we consider the results for a single \ac{UE}, since the results for different \acp{UE} only differ by the channel realizations. For the deterministic channel $\mx{g}_k$, we obtain the \ac{NMSE} as
\begin{equation}
    \text{NMSE} = \frac{\text{MSE}}{\|\mx{g}_k\|^2}.
\end{equation}
\begin{table}[t]
    \centering
    \caption{Parameters used in Figures \ref{fig:chanEstSNR} and \ref{fig:dataMSESNR}.}
    \begin{tabular}{|c|c|}
        \hline
        Parameter & Value\\
        \hline
        $P_p$ or $P_d$ & $-30,-25,\dots,40$ dBm\footnote{The results for $P_p = 45,50,55,\text{ and }60$ dBm are also demonstrated in Fig. \ref{fig:chanEstSNR} to display the high \ac{SNR} floor more clearly.}\\
        \hline
        \ac{UE}-\ac{RIS} path loss & $-80$ dB\\
        \hline
        \ac{RIS}-\ac{BS} path loss & $-60$ dB\\
        \hline
        $\sigma_w^2$ & $-90$ dBm\\
        \hline
        $N$ & $256$\\
        \hline
        $L$ & $513$\\
        \hline
    \end{tabular}
    \label{tab:numResParams}
\end{table}
In Fig. \ref{fig:chanEstSNR}, we plot \eqref{eq:traceError} for different values of $P_p$, and we also provide the high-\ac{SNR} floor for the case where $\mx{B}_1=\mx{B}_2$. The set of parameters used to generate Fig.~\ref{fig:chanEstSNR} and \ref{fig:dataMSESNR} are provided in Table~\ref{tab:numResParams}. On the other hand, the range of transmission power in Fig.~\ref{fig:capBound} is between $-10$ and $60$ dBm. In addition, for Fig.~\ref{fig:rayleighChanEst}, \ref{fig:rayleighChanEst} and \ref{fig:rayleighChanEst}, we consider a $8\times 8$ \ac{URA} geometry with $\lambda/2$ spacing in both vertical and horizontal axes. Therefore, the parameter values $N=64$ and $L=128$ apply to those figures. Note that at lower transmission powers, the covariance matrix of the estimator acts dominantly, hence, both \ac{RIS} configurations perform nearly the same. However, after $P_p=20$ dBm, the power of the estimator bias starts to dominate, and the average \ac{MSE} for $\mx{B}_1=\mx{B}_2$ converges to the floor denoted by the black dashed line, which is given by \eqref{eq:chEstFloor}. On the other hand, the average \ac{MSE} for $\mx{B}_1^H\mx{B}_2=\boldsymbol{0}$ does not stop there but keeps decreasing towards zero. As mentioned before, the \ac{MML} estimators used by the \acp{BS} coincide with the true \ac{ML} estimators when the \acp{RIS} are configured such that $\mx{B}_1^H\mx{B}_2=\boldsymbol{0}$.

\begin{figure}
    \centering
    \includegraphics[width=\linewidth]{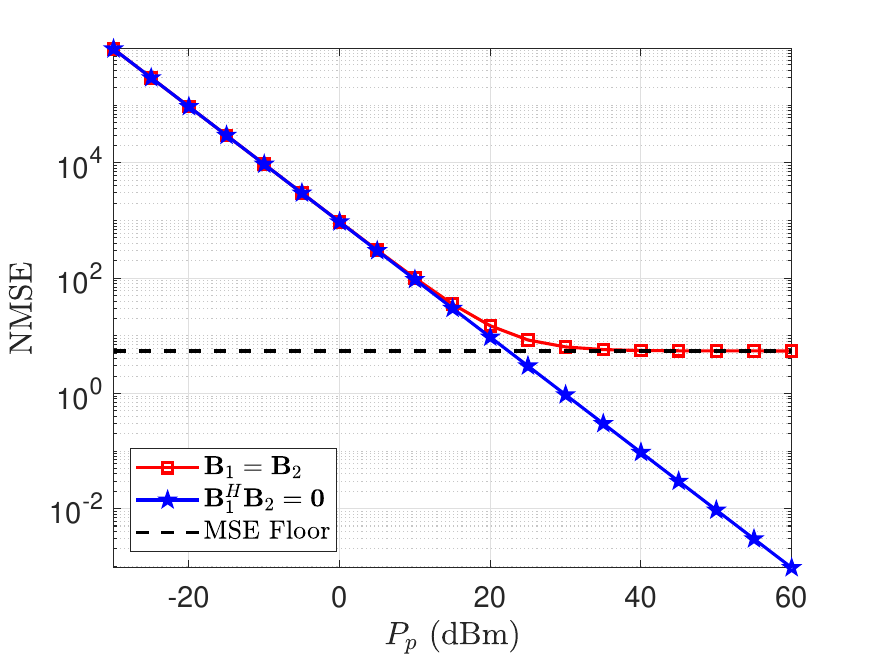}
    \caption{Pilot transmission power versus the channel estimation \ac{NMSE} for deterministic channels. Since the prior distribution of the parameter vector is not considered in the non-random parameter estimation framework, it is highly likely to obtain \acp{NMSE} greater than $1$.}
    \label{fig:chanEstSNR}
\end{figure}

\subsection{Data Estimation with Deterministic Channels}\label{sec:NumResDataEst}
In Fig. \ref{fig:dataMSESNR}, the data estimation \ac{MSE} performance with the two \ac{RIS} pilot configurations is analyzed when the channel estimation \ac{SNR} is high as in Section \ref{sec:highSNRChanEstRef}. That is, \eqref{eq:dataMSE_highChan} is plotted for $\mx{B}_1=\mx{B}_2$ and $\mx{B}_1^H\mx{B}_2=\boldsymbol{0}$. In addition, the case where all of the channels are perfectly known is plotted to serve as the golden standard, labeled as \emph{Perfect \ac{CSI}}. However, even when all the channels are perfectly known, each \ac{RIS} is assumed to be optimized independently according to the subscribed \ac{UE}'s \ac{CSI}. Note that although the channel estimation \ac{SNR} is high, $\mx{B}_1=\mx{B}_2$ yields biased estimates of $\mx{g}_1$ due to pilot contamination caused by self-interference. On the other hand, $\mx{B}_1^H\mx{B}_2=\boldsymbol{0}$ yields the true $\mx{g}_1$ as the estimate, however, since \ac{BS} $1$ is unaware of the path through the second \ac{RIS}, the data estimate is biased, hence, there is still a high data transmission \ac{SNR} floor. At around $P_d=5$ dBm, $\mx{B}_1=\mx{B}_2$ starts to approach the high-\ac{SNR} floor. On the other hand, $\mx{B}_1^H\mx{B}_2=\boldsymbol{0}$ does not suffer from the lack of awareness of the second \ac{RIS} path until around $P_d=20$ dBm. Hence, Fig. \ref{fig:dataMSESNR} clearly shows the benefit of configuring the \ac{RIS} pilot configurations sequences orthogonally. 

\begin{figure}
    \centering
    \includegraphics[width=\linewidth]{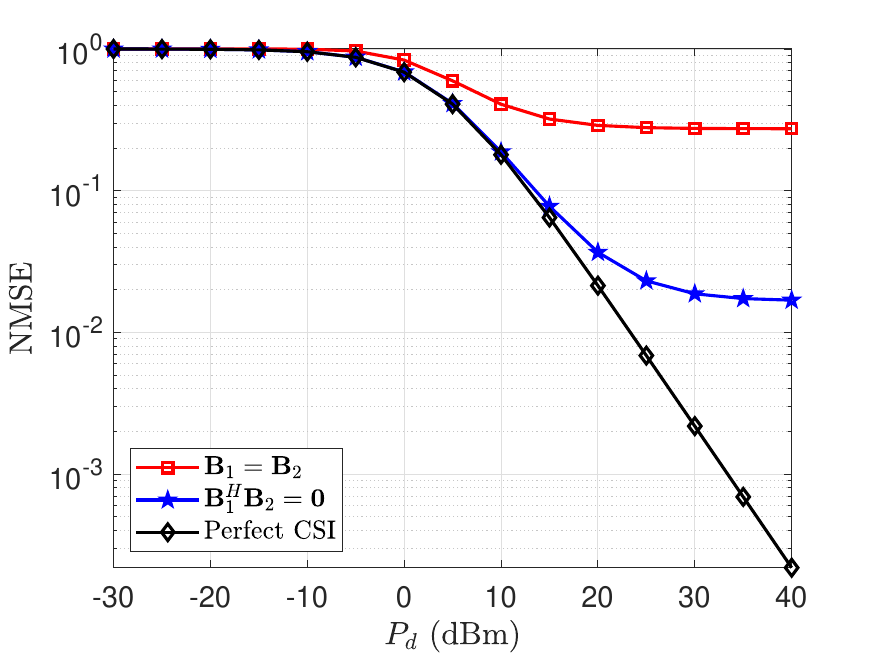}
    \caption{Data transmission power versus the data estimation \ac{NMSE} for deterministic channels with high channel estimation \ac{SNR}.}
    \label{fig:dataMSESNR}
\end{figure}

\subsection{Channel Estimation Based on Correlated Rayleigh Fading Priors}\label{sec:numResChanEstRay}
In Fig.~\ref{fig:rayleighChanEst}, the channel estimation performance with the two \ac{RIS} pilot configurations are analyzed for correlated Rayleigh fading channels. Fig.~\ref{fig:rayleighChanEst} is generated by computing the trace of the channel estimation error covariance matrix provided in \eqref{eq:chanEstMSE} and normalizing it by the factor of $\text{tr}(\boldsymbol{\Sigma}_{\mx{g}_k})$ for different $P_p$ values and different spatial channel correlation matrices. The spatial channel correlation matrices are computed for isotropic scattering based on the different \ac{RIS} element geometries according to \cite[Prop. 1]{Bjornson2021spatial}.
\begin{figure}
    \centering
    \includegraphics[width=\linewidth]{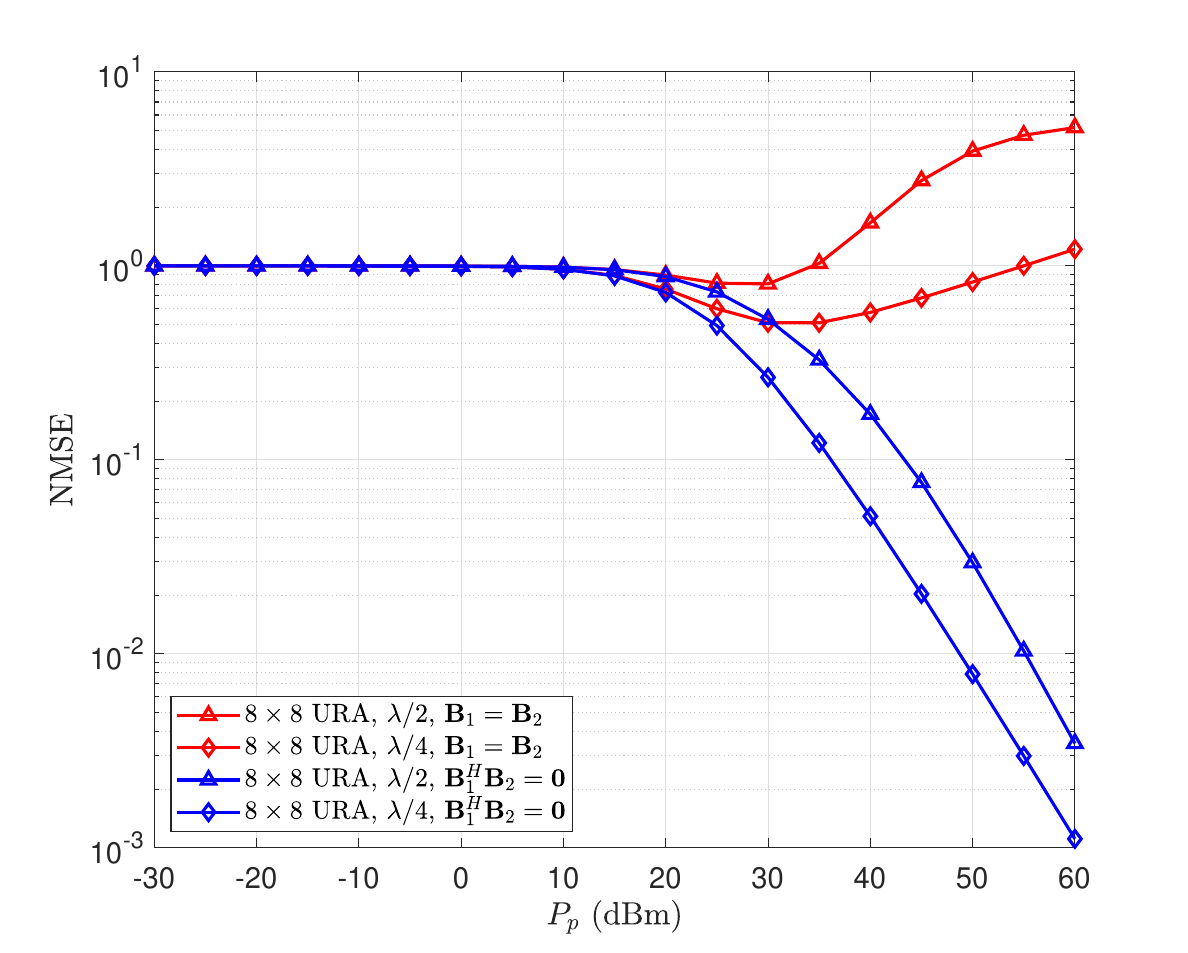}
    \caption{Pilot transmission power versus channel estimation \ac{MSE} for different \ac{RIS} geometries.}
    \label{fig:rayleighChanEst}
\end{figure}
Note that configuring the \acp{RIS} such that $\mx{B}_1=\mx{B}_2$ causes severe problems in channel estimation, that is, the \ac{MSE} increases as pilot transmission power increases for all geometries while $\mx{B}_1^H\mx{B}_2=\boldsymbol{0}$ completely eliminates pilot contamination. In addition, one can note that as the spatial correlation increases, channel estimation performance also increases since different channel parameters contain more information from one another.

Furthermore, in Fig.~\ref{fig:rayleighChanEst_analysis}, we demonstrate the two different components of the channel estimation error in the presence of inter-operator pilot contamination. While the red curve corresponds to the total \ac{NMSE} and the blue curve corresponds to the \ac{NMSE} in the absence of pilot contamination as usual, the green curve demonstrates the term coming from pilot contamination, as in \eqref{eq:chanEstMSE}. On the other hand, the black dashed line represents the asymptote of the channel estimation \ac{NMSE} in the presence of pilot contamination, which is provided by \eqref{eq:pilotContaminationSummary}. Note that as the transmission power increases, the \ac{NMSE} coming from pilot contamination also increases up to a certain point, and converges to the trace of \eqref{eq:pilotContaminationSummary} for $\mx{B}_1=\mx{B}_2$ due to the fact that at very high transmission powers, the increase in pilot contamination cancels out with the increasing ability to estimating the channel.
\begin{figure}
    \centering
    \includegraphics[width=\linewidth]{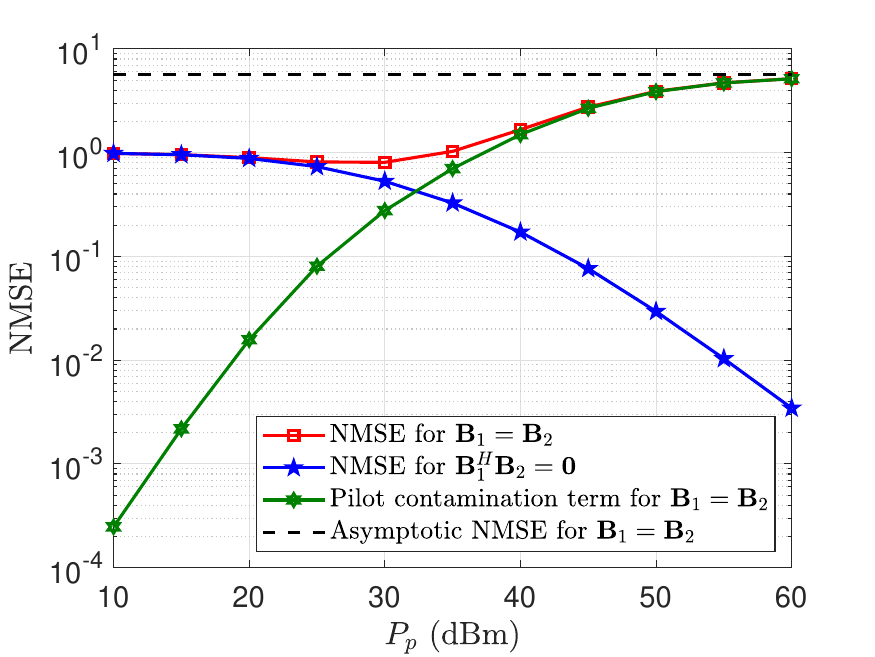}
    \caption{Different components of the channel estimation \ac{NMSE}}
    \label{fig:rayleighChanEst_analysis}
\end{figure}

\subsection{Capacity Lower Bound for Reliable Communication Under Imperfect \ac{CSI}}
In Fig.~\ref{fig:capBound}, the capacity lower bound derived in Section \ref{sec:capBound} is plotted against the data transmission power. This is performed by generating several channel realizations and computing \eqref{eq:capacityBound}. Note that when $\mx{B}_1=\mx{B}_2$, the capacity lower bound stops increasing after $P_d=30$ dBm while this happens at around $P_d=40$ dBm for $\mx{B}_1^H\mx{B}_2=\boldsymbol{0}$ when the effect of the misspecified channel during data transmission starts to appear. In any case, it is clear that configuring the \acp{RIS} such that $\mx{B}_1^H\mx{B}_2=\boldsymbol{0}$ almost doubles the capacity lower bound. 
\begin{figure}
    \centering
    \includegraphics[width=\linewidth]{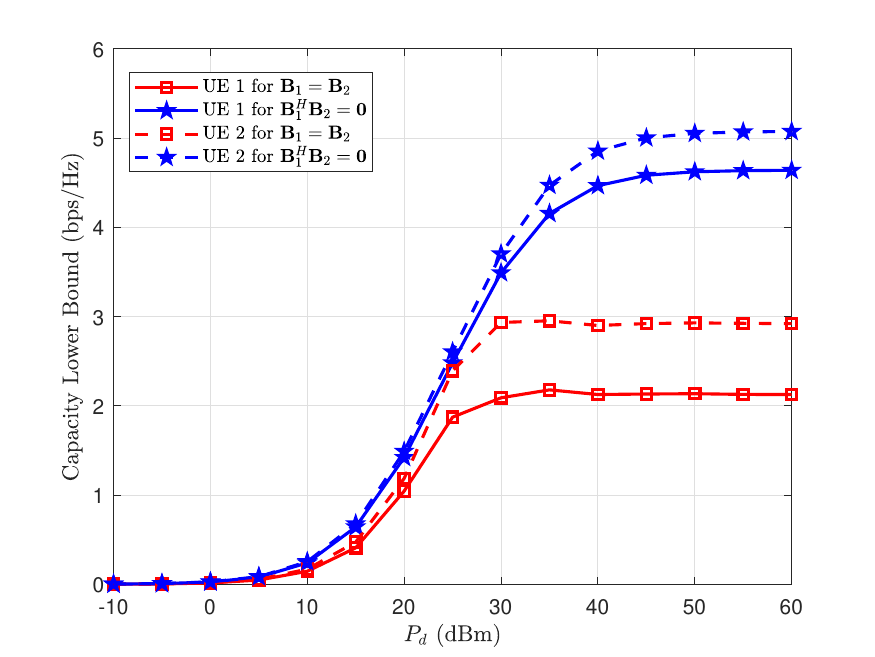}
    \caption{Data transmission power versus capacity lower bound for the two users.}
    \label{fig:capBound}
\end{figure}
\section{Conclusions}\label{sec:Conclusion}
In this paper, we have studied the impact of pilot contamination in a system consisting of two wide-band \acp{RIS}, two single-antenna \acp{UE}, and two co-located single-antenna \acp{BS}. We have demonstrated that the presence of multiple \acp{RIS} in the same area causes pilot contamination, although the \acp{UE} are subscribed to different operators and transmit over disjoint narrow frequency bands. 
To combat this new type of pilot contamination, we have proposed the use of orthogonal \ac{RIS} configurations during pilot transmission. For two different sets of assumptions, that is, deterministic and correlated Rayleigh-fading channel models, we have derived the channel and data estimation \acp{MSE} and the capacity lower bound in closed-form. In the numerical results, we have clearly shown that the proposed approach eliminates pilot contamination completely, and decreases data estimation \ac{MSE} significantly for deterministic channels. On the other hand, we have also shown that the capacity lower bound almost doubles when the \acp{RIS} are configured orthogonally during the pilot transmission step. While one might argue that this doubling comes at the expense of doubling the number of pilots, the estimates can be used for many data transmissions if the channel is static enough, resulting in a higher overall data rate. While this study covers the channel estimation performance in multi-operator \ac{RIS}-based pilot contamination scenarios for both deterministic and stochastic channels, further analysis is needed for parametric channel models, 
which opens a new set of possibilities.
\bibliographystyle{IEEEtran}
\bibliography{IEEEabrv,references}
\end{document}